\newtheorem{definition}{Definition}[section]
\newtheorem{lemma}[definition]{Lemma}
\newtheorem{theorem}[definition]{Theorem}
\newtheorem{corollary}[definition]{Corollary}
\numberwithin{equation}{section}
\def\tr{\mathrm{tr}}
\def\a0{\mathfrak{a}_0}
\def\cF{\mathcal{F}}
\def\ph{\varphi}
\def\bN{\mathbb{N}}
\def\bR{\mathbb{R}}
\def\bP{\mathbb{P}}
\def\bZ{\mathbb{Z}}
\def\cO{\mathcal{O}}
\def\cF{\mathcal{F}}
\def\cN{\mathcal{N}}
\def\cG{\mathcal{G}}
\def\cM{\mathcal{M}}
\def\cJ{\mathcal{J}}
\def\cC{\mathcal{C}}
\def\cQ{\mathcal{Q}}
\def\cV{\mathcal{V}}
\def\cE{\mathcal{E}}
\def\wt{\widetilde}
\begin{document}

\title{Central Limit Theorem for Bose-Einstein Condensates}

\author{Simone Rademacher, Benjamin Schlein \\
\\
Institute of Mathematics, University of Zurich\\
Winterthurerstrasse 190, 8057 Zurich, 
Switzerland}

\maketitle

\begin{abstract}
We consider a Bose gas trapped in the unit torus in the Gross-Pitaevskii regime. In the ground state, we prove that fluctuations of bounded one-particle observables satisfy a central limit theorem.
\end{abstract}

\section{Introduction}

We consider Bose gases consisting of $N$ particles trapped in the three dimensional unit torus $\Lambda$ interacting through a repulsive potential with scattering length of the order $N^{-1}$ (Gross-Pitaevskii regime). The Hamilton operator is given by 
\begin{equation}\label{eq:HN} H_N = \sum_{j=1}^N -\Delta_{x_j} + \sum_{i<j}^N N^2 V(N(x_i -x_j)) \end{equation}
and, according to the bosonic statistics, it acts on $L^2_s (\Lambda^N)$, the subspace of $L^2 (\Lambda^N)$ consisting of functions that are symmetric with respect to permutations of the $N$ particles. 

At zero temperature, the system relaxes to the ground state, described by a normalized eigenvector of (\ref{eq:HN}) associated with its smallest eigenvalue. From \cite{LS,LS2,NRS,BBCS}, the ground state of (\ref{eq:HN}) is known to exhibit complete condensation in the zero-momentum mode $\ph_0$ defined through $\ph_0 (x) = 1$ for all $x \in \Lambda$. In other words, if we denote by $\gamma^{(1)}_N = \tr_{2,\dots, N} |\psi_N \rangle \langle \psi_N|$ the one-particle reduced density matrix associated with the normalized ground state vector $\psi_N$ (without loss of generality, we choose $\psi_N$ to be the unique positive normalized ground state vector of $H_N$), then (for example, in the trace-norm topology) 
\begin{equation}\label{eq:BEC} \lim_{N \to \infty} \gamma^{(1)}_N = |\ph_0 \rangle \langle \ph_0| \end{equation}
Eq. (\ref{eq:BEC}) states that almost all particles, up to a fraction vanishing as $N \to \infty$, are in the same one-particle state $\ph_0$. In fact, (\ref{eq:BEC}) also implies convergence of higher order reduced densities. For $k \in \bN$, we define the $k$-particle reduced density $\gamma^{(k)}_N = \tr_{k+1, \dots , N} |\psi_N \rangle \langle \psi_N|$ (normalized so that $\tr \, \gamma^{(k)}_N = 1$). Then, we find \begin{equation}\label{eq:BECk} \lim_{N \to \infty} \gamma^{(k)}_{N} = |\ph_0 \rangle \langle \ph_0|^{\otimes k}\end{equation} 
for all fixed $k \in \bN$. It should be stressed, however, that (\ref{eq:BEC}) and (\ref{eq:BECk}) do not imply that $\ph_0^{\otimes N}$ is a good approximation for the ground state vector $\psi_N$. From \cite{LY,LSY}, it is known that the ground state energy per particle is such that 
\begin{equation}\label{eq:gs-en}  \lim_{N \to \infty} N^{-1} \langle \psi_N, H_N \psi_N \rangle = 4\pi \frak{a}_0   \end{equation}
where $\frak{a}_0$ denotes the scattering length of $V$. A simple computation shows instead that 
\begin{equation}\label{eq:en-fact} \lim_{N\to \infty} N^{-1} \langle \ph_0^{\otimes N} , H_N \ph_0^{\otimes N} \rangle = \frac{\widehat{V} (0)}{2} \end{equation}
Since $\widehat{V} (0) > 8 \pi a_0$, the energy of the factorized state $\ph_0^{\otimes N}$ is always much larger than the energy of the ground state $\psi_N$; the excess energy is macroscopic, of order $N$. In the Gross-Pitaevskii regime, correlations among particles are crucial. They are responsible for lowering the energy from (\ref{eq:en-fact}) to (\ref{eq:gs-en}). Since they vary on the length-scale $N^{-1}$, they disappear in the limit $N \to \infty$, but only at the level of the reduced densities. 

Eq. (\ref{eq:BECk}) allows us to estimate averages of one-particle observables. Given a self-adjoint operator $O$ on $L^2 (\Lambda)$ (a one-particle operator), we denote by $O_j = 1 \otimes \dots \otimes 1 \otimes O \otimes 1 \otimes \dots \otimes 1$ the operator on $L^2_s (\Lambda^N)$ acting as $O$ on the $j$-th particle, and as the identity elsewhere. From (\ref{eq:BEC}), we find 
\[ \lim_{N \to \infty} \Big\langle \psi_N, \frac{1}{N} \sum_{j=1}^N O_j  \psi_N \Big\rangle = \lim_{N \to \infty} \tr \, \gamma_N^{(1)} O = \langle \ph_0 , O \ph_0 \rangle \]
In fact, Eq. (\ref{eq:BECk}) with $k =2$ also implies a law of large numbers for the probability distribution associated with the ground state wave function $\psi_N$. For any $\delta > 0$, 
we find 
\begin{equation}\label{eq:lln}  \lim_{N \to \infty} \bP_{\psi_N} \Big( \Big| \frac{1}{N} \sum_{j=1}^N O_j - \langle \ph_0, O \ph_0 \rangle \Big|  \geq \delta \Big) = 0 \end{equation}
To prove (\ref{eq:lln}), we set $\widetilde{O} = O - \langle \ph_0 , O \ph_0 \rangle$ and we use Markov's inequality to estimate  
\[\begin{split} \bP_{\psi_N} \Big( \Big| \frac{1}{N} \sum_{j=1}^N O_j - \langle \ph_0, O \ph_0 \rangle \Big|  \geq \delta \Big) &= \bP \Big( \frac{1}{N\delta} \Big| \sum_{j=1}^N \wt{O}_j \Big| \geq 1 \Big)  \\ &\leq \frac{1}{N^2 \delta^2} \langle \psi_N, \sum_{i,j =1}^N \widetilde{O}_i \widetilde{O}_j \psi_N \rangle \\ &= \frac{N(N-1)}{2N^2 \delta^2} \, \tr \; \gamma^{(2)}_N \widetilde{O}_1 \otimes \widetilde{O}_2 + \frac{1}{N \delta^2} \, \tr  \, \gamma_N^{(1)} \widetilde{O}^2 \to 0 \end{split} \]
as $N \to \infty$, because $\tr \, |\ph_0 \rangle \langle\ph_0| \widetilde{O} = 0$. Thus, the correlation structure characterizing the ground state wave function $\psi_N$ does 
not affect the law of large numbers. As $N \to \infty$, the average concentrates around the value $\langle \ph_0, O \ph_0 \rangle$, exactly as it would in the completely factorized 
state $\ph_0^{\otimes N}$. 

In our main theorem, we show that the probability distribution generated by $\psi_N$ also satisfies a (multivariate) central limit theorem; (joint) fluctuations of observables of averages of the form $\sum_{j=1}^N O_j$ are Gaussian and have size of order $\sqrt{N}$. A similar question has been addressed in \cite{BKS,BSS}, for the time-evolution of mean-field quantum systems.

\begin{theorem}\label{thm:main}
Let $V \in L^3( \mathbb{R}^3)$ be non-negative, spherically symmetric and compactly supported. Let $\psi_N \in L_s^2 ( \Lambda^N)$ with $\| \psi_N \| =1$ denote the (unique) positive normalized ground state wave function of the Hamiltonian \eqref{eq:HN}.

For $k \in \mathbb{N}$, let $O_1, \dots, O_k$  be bounded operators on $L^2( \Lambda )$. For $j=1,\dots, k$, let 
\begin{equation}
\label{eq:cO}
\cO_j = \frac{1}{\sqrt{N}} \sum_{\ell =1}^N \left(O_{j,\ell} - \langle \ph_0, O_j \ph_0 \rangle \right)
\end{equation} 
where $O_{j,\ell} = 1 \otimes \dots \otimes O_j \otimes \dots \otimes 1$ denotes the operator acting as $O_j$ on the $\ell$-th particle and as the identity on the other $(N-1)$ particles. 

For $j=1, \dots , k$, let  
\[ 
\nu_j (p) = (\widehat{q_0 O_j \ph_0}) (p) \cosh (\mu_p)+ (\widehat{\overline{q_0 O_j \ph_0}}) (p) \sinh (\mu_p) 
\]
where $\ph_0 \in L^2 (\Lambda)$ is the zero-momentum mode (ie. $\ph_0 (x) = 1$ for all $x \in \Lambda$), $q_0 = 1- |\ph_0\rangle \langle \ph_0|$ and 
\begin{equation}\label{eq:mup0} \mu_p = \frac{1}{4} \log \left( \frac{p^2}{p^2 + 16 \pi \frak{a}_0} \right) \end{equation}
for all $p \in \Lambda^*_+ = 2 \pi \mathbb{Z}^3 \setminus \lbrace 0 \rbrace$. Let \begin{equation}
\label{eq:sigma}
\Sigma_{i,j} = \left\{ \begin{array}{ll} \langle \nu_i, \nu_j \rangle = \sum_{p \in \Lambda^*_+} \overline{\nu_i (p)}  \, \nu_j (p) \quad &\text{if $i \leq j$} \\
\langle \nu_j, \nu_i \rangle \quad &\text{if $j < i$} \end{array} \right. 
\end{equation} 
We assume the $k \times k$ matrix $\Sigma = (\Sigma_{i,j} )_{1 \leq i,j \leq k}$ to be invertible (by definition $\text{Re } \Sigma \geq 0$). 

Let  $g_1, \dots g_k \in L^1( \bR)$ with Fourier transforms $\widehat{g}_j \in L^1 ( \bR, (1+|s|^4)  ds)$ for all $j =1 , \dots , k$. Then  
\begin{align*}
&\left\vert \mathbb{E}_{\psi_N} \left[ g_1 ( \mathcal{O}_1) \cdots g_k( \mathcal{O}_k ) \right] -  \int d\lambda_1 \dots d\lambda_k \; g_1(\lambda_1) \dots g_k(\lambda_k) \; \frac{e^{-1/2 \sum_{i,j=1}^k \Sigma_{i,j}^{-1} \lambda_i \lambda_j}}{\sqrt{( 2 \pi)^k \mathrm{det} \Sigma}}  \right\vert\\
& \hspace{8cm} \leq \frac{C}{N^{1/4}} \prod_{j=1}^k \int ds \; | \widehat{g}_j (s)| (1 + |s|^4 \| O_j \|^4)
\end{align*}
for a constant $C > 0$ depending only on $k \in \bN$. 
\end{theorem}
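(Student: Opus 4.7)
The plan is to Fourier-invert each $g_j$ and reduce the theorem to an estimate on the multivariate characteristic function; then to diagonalize the Bogoliubov structure of the ground state in order to reduce this characteristic function to a vacuum computation. Using $g_j(\mathcal{O}_j) = (2\pi)^{-1}\int \widehat{g}_j(s_j)\, e^{is_j \mathcal{O}_j}\, ds_j$ together with the hypothesis $\widehat{g}_j \in L^1(\bR,(1+|s|^4)ds)$, it suffices to show
\[
\Big| \mathbb{E}_{\psi_N}\Big[\prod_{j=1}^k e^{is_j \mathcal{O}_j}\Big] - e^{-\frac{1}{2}\sum_{i,j}\Sigma_{ij} s_i s_j} \Big| \leq \frac{C}{N^{1/4}} \prod_{j=1}^k (1 + |s_j|^4 \|O_j\|^4).
\]

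I would then pass to the excitation Fock space through the BBCS unitary $U_N : L^2_s(\Lambda^N) \to \cF_+^{\leq N}$ that factors out the condensate. Since $\sum_\ell O_{j,\ell}$ is a one-body operator, one obtains $U_N \mathcal{O}_j U_N^* = \phi_+(q_0 O_j \ph_0) + \mathcal{R}_j$, where $\phi_+(f) = a^*(f) + a(f)$ acts on excitations and $\mathcal{R}_j$ is bounded by $\|O_j\|(\mathcal{N}_+ + 1)/\sqrt{N}$ on states with few excitations. Next I apply the (generalized) Bogoliubov transformation $T$, combined with the cubic transformation from \cite{BBCS} needed to handle Gross--Pitaevskii correlations. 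Since $T^* a^*_p T = \cosh(\mu_p) a^*_p + \sinh(\mu_p) a_{-p}$ up to remainders controlled by $\mathcal{N}_+/N$, conjugation gives $T^* \phi_+(q_0 O_j \ph_0) T = \phi_+(\nu_j) + \mathcal{E}_j$, with $\nu_j$ precisely as defined in the statement and $\mathcal{E}_j$ again of size $\|O_j\|(\mathcal{N}_++1)/\sqrt{N}$. Crucially, the BBCS analysis provides that the renormalized ground state $\xi_N := T^* U_N \psi_N$ approximates the Fock vacuum $\Omega$ with uniformly bounded moments of $\mathcal{N}_+$.

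The target Gaussian then arises from a direct vacuum computation. Because $[\phi_+(\nu_i), \phi_+(\nu_j)] = \langle \nu_i, \nu_j\rangle - \langle \nu_j, \nu_i\rangle$ is a scalar, the Baker--Campbell--Hausdorff formula yields
\[
\prod_{j=1}^k e^{is_j \phi_+(\nu_j)} = e^{i\phi_+(\sum_j s_j \nu_j)} \cdot \exp\Big(-\tfrac{1}{2}\sum_{i<j} s_i s_j\big(\langle \nu_i, \nu_j\rangle - \langle \nu_j, \nu_i\rangle\big)\Big),
\]
and combining with $\langle \Omega, e^{i\phi_+(\eta)}\Omega\rangle = e^{-\|\eta\|^2/2}$ and the asymmetric definition \eqref{eq:sigma} of $\Sigma$ reproduces exactly $\exp\bigl(-\frac{1}{2}\sum_{i,j}\Sigma_{ij}s_i s_j\bigr)$.

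The hard part is quantifying the replacements $\xi_N \leadsto \Omega$ and $\mathcal{E}_j + \mathcal{R}_j \leadsto 0$ inside the product of unitaries, uniformly in $s_1, \dots, s_k$. I would proceed by a Duhamel argument: differentiating the partial characteristic function one $s_j$ at a time expresses the error as integrals of $\mathcal{E}_j + \mathcal{R}_j = O(\|O_j\|(\mathcal{N}_++1)/\sqrt{N})$ sandwiched between the unitary groups $e^{is\mathcal{A}_j}$ they generate. The key technical input is a propagation estimate showing that these groups preserve the moments of $\mathcal{N}_+$ up to a Grönwall-type factor polynomial in $|s|\|O_j\|$; this is delicate because the cubic piece contributes genuinely unbounded terms like $d\Gamma(\cdot)/\sqrt{N}$ that do not commute with $\mathcal{N}_+$. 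Combining this with the bounded $\mathcal{N}_+$-moments on $\xi_N$ and balancing Cauchy--Schwarz against the $N^{-1/2}$ suppression in the error operators produces both the rate $N^{-1/4}$ and the $(1+|s_j|^4\|O_j\|^4)$ weight stated in the theorem.
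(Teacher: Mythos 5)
Your outline follows the paper's strategy closely: Fourier inversion to the joint characteristic function, conjugation through $U_N$ and the Bogoliubov/cubic transformations of \cite{BBCS}, Duhamel (telescoping) expansions controlled by propagation of $\mathcal{N}_+$-moments through the Weyl-type unitaries (the paper's Lemma \ref{lm:weyl}), and the Weyl-relation vacuum computation producing $e^{-\frac12\sum_{i,j}s_i s_j\Sigma_{ij}}$ with the asymmetrically defined covariance. Two points need correction, one of them substantive.

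First, your accounting of the rate is wrong. In the paper every conjugation error --- the $d\Gamma(q_0\widetilde{O}_jq_0)/\sqrt N$ remainder from $U_N$, the $d_p$ remainders from $T_\eta$ and $T_\tau$, and the commutator with the cubic phase --- is $O(N^{-1/2})$ or $O(N^{-1})$ after Cauchy--Schwarz against the uniformly bounded $\mathcal{N}_+$-moments; there is no ``balancing'' that degrades $N^{-1/2}$ to $N^{-1/4}$. The $N^{-1/4}$ comes entirely from the norm approximation $\|\psi_N-e^{i\omega}U_N^*T_\eta S T_\tau\Omega\|\le CN^{-1/4}$ of \eqref{eq:psi-app}, which you invoke only qualitatively (``approximates the Fock vacuum''). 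Without quantifying that single replacement in norm, your error estimate does not close, and the mechanism you propose in its place does not produce the stated rate. Second, the claim that the combined transformation gives $\phi_+(\nu_j)+\mathcal{E}_j$ with $\mathcal{E}_j=O(\|O_j\|(\mathcal{N}_++1)/\sqrt N)$ conceals the main technical work: the cubic operator $S=e^{-A}$ sits \emph{between} $T_\eta$ and $T_\tau$, so one cannot simply compose the two quadratic rotations; one must prove the form bound $|\langle\xi_1,[b(h),A]\xi_2\rangle|\le CN^{-1/2}\|h\|\,\|(\mathcal{N}_++1)^{1/2}\xi_1\|\,\|(\mathcal{N}_++1)^{1/2}\xi_2\|$ (the paper's Lemma \ref{lemma:commA}, a page-long normal-ordering computation) and, separately, that $\eta_p+\tau_p=\mu_p+O(N^{-1})$ uniformly in $p$. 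Your claimed size for $\mathcal{E}_j$ is correct, but it is asserted rather than derived, and since it is precisely the reason the Gross--Pitaevskii correlations do not alter the Gaussian covariance, it cannot be left as a black box.
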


As a corollary to Theorem \ref{thm:main}, we obtain a Berry-Ess\'{e}en type central limit theorem, with  explicit control of the error. We skip the proof which follows very closely 
the one of \cite[Corollary 1.2]{BSS}.
\begin{corollary}\label{cor:BE}
Let $V \in L^3( \mathbb{R}^3)$ be non-negative, spherically symmetric and compactly supported. Let $\psi_N \in L_s^2 ( \Lambda^N)$ with $\| \psi_N \| =1$ denote the (unique) positive normalized ground state wave function of the Hamiltonian \eqref{eq:HN}. Let $O$ be a bounded, self-adjoint one-particle operator on $L^2 (\Lambda)$ and 
\[ \cO = \frac{1}{\sqrt{N}} \sum_{j=1}^N \left( O - \langle \ph_0 , O \ph_0 \rangle \right) .\] 
For every $-\infty < \alpha < \beta < \infty$, there exists a constant $C>0$ such that
\begin{equation}\label{eq:cor}
\vert \mathbb{P}_{\psi_N} \left( \mathcal{O} \in \left[ \alpha; \beta \right] \right) - \mathbb{P} \left( G \in \left[ \alpha; \beta \right] \right) \vert \leq C N^{-1/8},
\end{equation}
where $G$ is the centred Gaussian random variable  with variance $\| \nu \|^2_2$, where $\nu \in \ell^2 ( \Lambda_+^*)$ is defined through 
\[ 
\nu (p) = (\widehat{q_0 O \ph_0}) (p) \cosh (\mu_p) + (\widehat{\overline{q_0 O \ph_0}}) (p) \sinh (\mu_p) \]
with $\mu$ as defined in (\ref{eq:mup0}). 
\end{corollary}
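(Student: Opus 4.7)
The plan is to derive the Berry--Ess\'een bound \eqref{eq:cor} by combining Theorem \ref{thm:main}, in the single-observable case $k=1$, with a standard smoothing/sandwich argument applied to the indicator $\chi_{[\alpha,\beta]}$.

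First I would specialize Theorem \ref{thm:main} to $k=1$: the matrix $\Sigma$ collapses to the scalar $\sigma^2 = \|\nu\|_2^2$, which is exactly the variance of the limiting Gaussian $G$ in the Corollary. For every $g \in L^1(\bR)$ with $\widehat{g} \in L^1(\bR,(1+|s|^4)\,ds)$, the Theorem then gives
\[ \bigl| \mathbb{E}_{\psi_N}[g(\cO)] - \mathbb{E}[g(G)] \bigr| \leq C(\|O\|)\, N^{-1/4} \int |\widehat{g}(s)|(1+|s|^4)\, ds. \]

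Next I would fix $\varepsilon>0$ and build smooth sandwich functions $g_-^\varepsilon \leq \chi_{[\alpha,\beta]} \leq g_+^\varepsilon$, for instance of the form $g_\pm^\varepsilon = \chi_{[\alpha\mp\varepsilon,\beta\pm\varepsilon]} * \phi_\varepsilon$, with $\phi_\varepsilon$ a non-negative mollifier of scale $\varepsilon$ whose Fourier transform lies in the Schwartz class. A careful Fourier estimate, exactly as in \cite[Corollary 1.2]{BSS}, shows that the $g_\pm^\varepsilon$ can be arranged so that
\[ \int |\widehat{g_\pm^\varepsilon}(s)|(1+|s|^4)\, ds \leq C\,\varepsilon^{-b} \]
for an explicit exponent $b>0$. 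Combining the sandwich with the previous bound and the estimate $\mathbb{E}[(g_+^\varepsilon - g_-^\varepsilon)(G)] \leq C\varepsilon/\sigma$ (which uses the boundedness of the Gaussian density and the fact that $g_+^\varepsilon - g_-^\varepsilon$ is supported in an $\varepsilon$-neighborhood of $\{\alpha,\beta\}$) yields
\[ \bigl| \mathbb{P}_{\psi_N}(\cO\in[\alpha,\beta]) - \mathbb{P}(G\in[\alpha,\beta]) \bigr| \leq C N^{-1/4}\varepsilon^{-b} + C\varepsilon/\sigma. \]
Optimizing in $\varepsilon$ produces the rate $N^{-1/(4(b+1))}$; the construction of \cite{BSS} achieves $b=1$, giving the advertised $N^{-1/8}$.

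The only delicate step will be the Fourier estimate behind the $\varepsilon^{-b}$ bound: one splits the $s$-integral at $|s|\sim 1/\varepsilon$, controls the small-$s$ piece by the $L^1$ norm $\|g_\pm^\varepsilon\|_1$, and forces decay at large $s$ through iterated integration by parts against sufficiently many derivatives of $\phi_\varepsilon$. Everything else---the reduction to $k=1$, the sandwich inequality, and the final $\varepsilon$-optimization---transcribes verbatim from \cite[Corollary 1.2]{BSS}, with Theorem \ref{thm:main} serving as the CLT input in place of their time-evolution CLT.
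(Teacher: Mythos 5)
Your overall strategy---specialize Theorem \ref{thm:main} to $k=1$, sandwich $\chi_{[\alpha,\beta]}$ between smooth approximants $g_\pm^\varepsilon$ at scale $\varepsilon$, and optimize over $\varepsilon$---is exactly the argument the paper intends (it defers to \cite[Corollary 1.2]{BSS} and omits the details). But there is a genuine quantitative gap: the exponent $b=1$ you assert is impossible, and with the correct value of $b$ the black-box use of Theorem \ref{thm:main} only yields the rate $N^{-1/20}$, not the claimed $N^{-1/8}$. Indeed, for any $g_\varepsilon$ squeezed against the indicator, $g_\varepsilon$ must pass from $0$ to $1$ across windows of width $\varepsilon$, so $\|g_\varepsilon''''\|_\infty \gtrsim \varepsilon^{-4}$ and hence $\int |\widehat{g_\varepsilon}(s)|\,|s|^4\, ds \geq 2\pi \|g_\varepsilon''''\|_\infty \gtrsim \varepsilon^{-4}$; equivalently, $|\widehat{g_\varepsilon}(s)|$ genuinely behaves like $|s|^{-1}$ up to $|s|\sim \varepsilon^{-1}$, forcing $b=4$. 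The optimization $N^{-1/4}\varepsilon^{-4}+\varepsilon$ then gives $\varepsilon = N^{-1/20}$ and only the rate $N^{-1/20}$.

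To recover $N^{-1/8}$ you must open up the proof of Theorem \ref{thm:main} rather than use its statement as a black box. The point is that the dominant $O(N^{-1/4})$ error \eqref{eq:pro1}, which comes from the norm approximation \eqref{eq:psi-app}, is multiplied only by $\|\widehat{g}\|_1$---and for $g_\pm^\varepsilon$ one has $\|\widehat{g_\pm^\varepsilon}\|_1 = O(\log (1/\varepsilon))$---whereas the $s$-weighted errors carry better powers of $N$: Steps 1 and 3 contribute $O(N^{-1/2}(|s|+|s|^3))$ and Steps 2, 4, 5 contribute $O(N^{-1}(|s|+|s|^4))$. Integrating these against $|\widehat{g_\pm^\varepsilon}(s)| \lesssim \min(1,|s|^{-1})$ (with rapid decay beyond $|s|\sim \varepsilon^{-1}$) gives a total error of order $N^{-1/4}\log(1/\varepsilon) + N^{-1/2}\varepsilon^{-3} + N^{-1}\varepsilon^{-4} + \varepsilon$, and the choice $\varepsilon = N^{-1/8}$, balancing the second term against the last, yields exactly the advertised $N^{-1/8}$. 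The remaining ingredients of your proposal---the sandwich inequality via the functional calculus for the self-adjoint operator $\cO$, and the bound $\mathbb{E}\left[(g_+^\varepsilon - g_-^\varepsilon)(G)\right] \leq C\varepsilon$ from the boundedness of the Gaussian density---are correct as stated.
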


{\it Remark:} The dependence of the constant $C$ on the r.h.s. of (\ref{eq:cor}) on $\alpha, \beta$ can be controlled by $C \leq c ( 1 + |\beta - \alpha|)$ for a universal constant $c > 0$. 

{\it Remark:} Theorem \ref{thm:main} and Corollary \ref{cor:BE} imply that, although the ground state exhibits important correlations, the probability distribution it generates still satisfies a central limit theorem. In our analysis, this will follow from the bound \eqref{eq:psi-app} taken from \cite{BBCS}, which shows that, up to corrections described by the unitary operator $S$ (which are important to compute the energy but, as we will prove, do not affect the distribution of products of bounded one-particle observables of the form \eqref{eq:cO}), the ground state is approximately quasi-free; in fact, the form of the covariance matrix \eqref{eq:sigma} is entirely determined by the two Bogoliubov transformations $T_\eta$ and $T_\tau$ appearing in \eqref{eq:psi-app}. 

{\it Remark:} While Theorem \ref{thm:main} and Corollary \ref{cor:BE} state properties of the probability distribution associated with the ground state of (\ref{eq:HN}), it is also possible to study the probability distribution generated by low-energy excited states; we briefly address 
this question in Appendix \ref{app}. 

\section{Approximation of ground state}

The proof of Theorem \ref{thm:main} is based on a norm-approximation for the ground state wave function $\psi_N$, recently obtained in \cite{BBCS}. To illustrate this approximation, we need to introduce some unitary operators leading to an approximate diagonalisation of the Hamilton operator (\ref{eq:HN}). 

First of all, following \cite{LNSS} we observe that every $\psi_N \in L^2_s (\Lambda^N)$ can be uniquely decomposed as 
\[ \psi_N = \alpha_0 \ph_0^{\otimes N} + \alpha_1 \otimes_s \ph_0^{\otimes (N-1)} + \dots + \alpha_N \]
with $\alpha_j \in L^2_{\perp \ph_0} (\Lambda)^{\otimes_s j}$, where $\otimes_s$ denotes the symmetrized tensor product and $L^2_{\perp \ph_0} (\Lambda)$ denotes the orthogonal complement in $L^2 (\Lambda)$ of the condensate wave function $\ph_0$. This observation allows us to define the unitary map $U_N : L^2_s (\Lambda^N) \to \cF_+^{\leq N}$ by $U_N \psi_N = \{ \alpha_0, \alpha_1, \dots , \alpha_N \}$. Here 
\[ \cF_+^{\leq N} = \bigoplus_{j=0}^N L^2_{\perp \ph_0} (\Lambda)^{\otimes_s j} \]
is the truncated Fock space (describing states with at most $N$ particles) constructed over $L^2_{\perp \ph_0} (\Lambda)$. On $\cF_+^{\leq N}$, we describe orthogonal excitations of the condensate (applying $U_N$, we factor out the Bose-Einstein condensate and we focus on its excitations). The action of $U_N$ is determined by the rules (see \cite{LNSS}) 
\begin{equation}\label{eq:rules} \begin{split} 
U_N^* a_0^* a_0 U_N &= N- \cN_+ \\
U_N^* a_p^* a_0 U_N &= a_p^* \sqrt{N- \cN_+} =: \sqrt{N} b_p^* \\
U_N^* a_0^* a_p U_N &= \sqrt{N-\cN_+} a_p =: \sqrt{N} b_p \\
U_N^* a_q^* a_p U_N &= a_q^* a_p \end{split} \end{equation}
for all $p,q \in \Lambda^*_+ := 2\pi \bZ^3 \backslash \{ 0 \}$. For $r \in 2\pi \bZ^3$, the operators $a^*_r, a_r$ are the usual creation and annihilation operators, creating and, respectively, annihilating a particle with momentum $r$. They satisfy the canonical commutation relations 
\begin{equation}\label{eq:ccr} \left[ a_r , a_s^* \right] = \delta_{r,s}, \qquad \left[ a_r , a_s \right] = \left[ a_r^* , a_s^* \right] = 0 \, . \end{equation} 

On the r.h.s. of (\ref{eq:rules}), $\cN_+ = \sum_{p \in \Lambda^*_+}  a_p^* a_p$ denotes the number of particles operator on $\cF_+^{\leq N}$; it measure the number of excitations of the Bose-Einstein condensate. Furthermore, for $p \in \Lambda^*_+$, we introduced the modified creation and annihilation operators 
\[ b_p^* = a_p^* \sqrt{\frac{N-\cN_+}{N}}, \qquad b_p = \sqrt{ \frac{N-\cN_+}{N}} a_p \]
These operators create and annihilate excitations, keeping the total number of particles invariant. They are useful because, in contrast to the standard creation and annihilation operators, they leave the Hilbert space $\cF^{\leq N}_+$ invariant and, at the same time, they provide a good approximation of $a_p^*, a_p$ on states exhibiting Bose-Einstein condensation (ie. states with few excitations, where $\cN_+ \ll N$). From (\ref{eq:ccr}), it is easy to derive the commutation relations 
\begin{equation}
\label{eq:ccr-mod}
\begin{split}
\left[ b_p ,b_q^* \right] &= \delta_{p,q} \left( 1 - \frac{\cN_+}{N} \right) - \frac{1}{N} a_q^* a_p , \qquad \left[ b_p , b_q \right] = \left[ b_p^* , b_q^* \right] = 0 
\end{split} 
\end{equation} 
More generally, for $h \in \ell^2 (\Lambda^*_+)$, we can define 
\begin{equation}
\label{eq:bb}
b(h) = \sum_{p \in \Lambda^*_+} \overline{h (p)} b_p, \qquad b^* (h) = \sum_{p \in \Lambda^*_+} h(p) b_p^* \end{equation}
Then $\| b(h) \xi \| \leq \| h \| \| \cN_+^{1/2} \xi \|$ and $\| b^* (h) \xi \| \leq \| h \| \| (\cN_+ + 1)^{1/2} \xi \|$. Here and in the following we use the notation $\| h \| = \big(\sum_{p \in \Lambda^*_+} |h(p)|^2 \big)^{1/2}$ for the $\ell^2$-norm.

Applying $U_N$ to the ground state $\psi_N$, we remove products of the condensate wave function $\ph_0$. Correlations are left in the excitation vector $U_N \psi_N \in \cF_+^{\leq N}$. To obtain a norm-approximation for $\psi_N$, we have to model the correlation structure. To this end, we introduce the generalized Bogoliubov transformation
\begin{equation}\label{eq:BogT} T_\eta = \exp \left[ \frac{1}{2} \sum_{p \in \Lambda^*_+} \eta_p \left( b^*_p b^*_{-p} - b_{-p} b_p \right) \right]  \end{equation}
where the coefficients $\eta_p$ are defined for $p \in \Lambda_+^*$ through \begin{equation*} 
\eta_p = - \frac{1}{N^2} \widehat{w}_\ell (p/N) \end{equation*}
with 
\[ \widehat{w}_\ell (k) = \int_{\bR^3} w_\ell (x) e^{-i k \cdot x } dx \]
and $w_\ell (x) = 1- f_\ell (x)$. Here we denote by $f_\ell$ the ground state solution of the Neumann problem 
\[ \left[- \Delta + \frac{1}{2} V \right] f_\ell = \lambda_\ell f_\ell \]
on the ball $|x| \leq N\ell$, for an $\ell \in (0;1/2)$. We recall from \cite[Lemma 3.1]{BBCS} that
\begin{equation}\label{eq:etap} |\eta_p| \leq C |p|^{-2} \end{equation} 
and that 
\begin{equation}\label{eq:Vfell-int} \left| \int V(x) f_\ell (x) dx - 8\pi \frak{a}_0 \right| \leq C N^{-1} \end{equation}

As proven in \cite[Lemma 2.3]{BBCS}, the action of the generalized Bogoliubov transformation (\ref{eq:BogT}) on (modified) creation and annihilation operators is given, for any $p \in \Lambda^*_+$, by the formulas
\begin{equation}\label{eq:act-T} \begin{split} T_\eta^* b_p T_\eta &= \cosh (\eta_p ) b_p + \sinh (\eta_p) b_{-p}^* +d_p \\
T_\eta^* b^*_p T_\eta &= \cosh (\eta_p ) b^*_p + \sinh (\eta_p) b_{-p} +d^*_p \end{split} \end{equation}
where the operators $d_p$ are small on states with few excitations, in the sense that  
\begin{equation}\label{eq:dp-bd} \begin{split} \| d_p \xi \| &\leq \frac{C}{N} \left[ |\eta_p| \| (\cN_+ + 1)^{3/2} \xi \| + \| b_p (\cN_+ + 1) \xi \| \right]  
 \end{split} \end{equation}
These bounds are consistent with the observation that $b_p \simeq a_p, b_p^* \simeq a_p^*$ on states exhibiting Bose-Einstein condensation. With (\ref{eq:act-T}) and (\ref{eq:dp-bd}) and with the observation that, by (\ref{eq:etap}), $\| \eta \| \leq C$, uniformly in $N$ ($\| \eta \|$ denotes the $\ell^2$-norm), one can show (see  \cite[Lemma 2.1]{BBCS}) that conjugation with $T_\eta$ leaves the number of excitations essentially unchanged; for every $k \in \bN$, there exists a constant $C_k > 0$ such that 
\begin{equation}\label{eq:TNT} T_\eta^* \cN^k_+ T_\eta \leq C_k (\cN^k_+ + 1) \end{equation}

Conjugating $H_N$ with $U_N$ and then with $T_\eta$, we obtain the renormalized excitation Hamiltonian $\cG_N = T_\eta^* U_N H_N U_N^* T_\eta$, defined on (a dense subspace of) the excitation space $\cF_+^{\leq N}$. In \cite[Prop. 3.2]{BBCS}, it is shown that we can decompose  $\cG_N = C_N + \cQ_N + \cC_N + \cV_N + \cE_N$ where $C_N$ is a constant ($4\pi a_0 N$, up to corrections of order one), $\cQ_N$ is quadratic in creation and annihilation operators, $\cC_N$ is cubic, $\cV_N$ is quartic (and positive) and $\cE_N$ is an error term, negligible on low-energy states. The presence of the cubic term $\cC_N$ is characteristic for the Gross-Pitaevskii regime. It implies that, to get a norm approximation of the ground state vector $\psi_N$, we need to apply an additional cubic transformation. We define 
\begin{equation}\label{eq:Adef} \begin{split} 
A &= \frac1{\sqrt{N}} \sum_{\substack{ r\in P_H, v\in P_L }} \eta_r \big[ \sinh (\eta_v) b^*_{r+v} b^*_{-r}b^*_{-v} +  \cosh (\eta_v) b^*_{r+v} b^*_{-r}b_{v} - \text{h.c.} \big] 
        \end{split}
\end{equation}
where $P_L = \{\, p \in \Lambda_+^* : |p|\leq N^{1/2} \}$ corresponds to low momenta and $P_H =\Lambda_+^* \setminus P_L$ to high momenta (by definition $r + v \not = 0$) 
and we consider the unitary operator $S = e^{-A}$. Similarly to (\ref{eq:TNT}), also the action of $S$ does not change the number of excitations substantially. From \cite[Prop. 4.2]{BBCS} it follows that, for every $k \in \bN$, there exists namely a constant $C_k >0$ such that 
\begin{equation}\label{eq:SNS} e^{\kappa A} \cN^k_+ e^{-\kappa A} \leq C_k (\cN^k_+ + 1)^k \end{equation}
for all $\kappa \in [-1;1]$. 

Conjugating the renormalized excitation Hamiltonian $\cG_N$ with the cubic transformation $S$, we define $\cJ_N = S^*  \cG_N S = S^* T_\eta^* U_N H_N U_N^* T_\eta S$. As shown in \cite[Prop. 3.3]{BBCS}, we can now write $\cJ_N = \wt{C}_N + \wt{\cQ}_N + \cV_N + 
\wt{\cE}_N$ where $\wt{C}_N$ is a constant, $\wt{\cE}_N$ is an error term negligible on low-energy states, and where the quadratic operator $\wt{Q}_N$ is given by 
\[ \cQ_N  = \sum_{p\in \Lambda^*_+} F_p a_p^* a_p + \frac{1}{2} \sum_{p \in \Lambda^*_+} G_p \left[ a_p^* a_{-p}^* + a_p a_{-p} \right] \]
with 
\[ \begin{split} 
F_p &= p^2 [ \sigma^2_p + \gamma^2_p] + \left( \widehat{V} (./N) * \widehat{f}_{N,\ell} \right)_p (\sigma_p + \gamma_p)^2  \\
G_p &= 2 p^2 \sigma_p \gamma_p + \left( \widehat{V} (./N) * \widehat{f}_{N,\ell} \right)_p (\sigma_p + \gamma_p)^2   \end{split} \]
To diagonalize $\wt{Q}_N$, we apply a final generalized Bogoliubov transformation 
\[ T_\tau  = \exp \left[ \frac{1}{2} \sum_{p \in \Lambda^*_+} \tau_p \left( b_p^* b_{-p}^* - b_p b_{-p} \right) \right] \]
with coefficients $\tau_p$ defined through 
\begin{equation}\label{eq:taup-def}  \tanh (2\tau_p) = - \frac{G_p}{F_p}  \end{equation}
As shown in \cite[Lemma 5.1]{BBCS}, we have $| \tau_p | \leq C |p|^{-4}$. This implies that $\| \tau \| \leq C$ uniformly in $N$ and thus, similarly to (\ref{eq:TNT}), that for every $k \in \bN$ there exists $C_k > 0$ such that 
\begin{equation}\label{eq:TtNTt} T_\tau^* \cN_+^k T_\tau \leq C_k (\cN^k_+ + 1) \end{equation}
It follows that $\cM_N = T_\tau^* \cJ_N T_\tau = T_\tau^*  S^* T_\eta^* U_N H_N U_N^* T_\eta S T_\tau$ is essentially quadratic and diagonal (up to the positive quartic term $\cV_N$ and an error that is negligible on low-energy states). This implies that the vacuum vector $\Omega = \{ 1, 0, \dots , 0\} \in \cF_+^{\leq N}$ approximate the ground state vector of $\cM_N$ in norm. In fact, it is shown in \cite[Sect. 6]{BBCS} that there is a phase $\omega \in [0;2\pi]$ such that  
\begin{equation}\label{eq:psi-app} \left\| \psi_N - e^{i\omega} U_N^* T_\eta S T_\tau  \Omega \right\| \leq C N^{-1/4} \end{equation}

\section{Proof of Theorem \ref{thm:main}}
\label{sec:proof}

The proof of the main theorem is based on (\ref{eq:psi-app}); after replacing $\psi_N$ with its approximation $T_\tau S T_\eta U_N^* \Omega$, we can control the action of the unitary operator $U_N$ and of the generalized Bogoliubov transformations $T_\eta$ and $T_\tau$ using (\ref{eq:rules}) and, respectively, (\ref{eq:act-T}), (\ref{eq:dp-bd}). To control the action of the cubic phase $S = e^{-A}$, on the other hand, we will make use of the following lemma.

\begin{lemma}
\label{lemma:commA}
Let $A$ be as defined in (\ref{eq:Adef}). There is a constant $C > 0$ such that 
\begin{align*}
| \langle \xi_1, \left[ b(h) , A \right] \xi_2 \rangle |
\leq&   \frac{C \| h \|}{\sqrt{N}} \,  \| ( \mathcal{N}_+ +1)^{1/2} \xi_1 \| \; \| ( \mathcal{N}_+ +1)^{1/2} \xi_2 \| , \\
| \langle \xi_1 ,  \left[ b^*(h), A \right]  \xi_2 \rangle | \leq&  \frac{C \| h \|}{\sqrt{N}}  \| ( \mathcal{N}_+ +1)^{1/2} \xi_1 \| \; \| ( \mathcal{N}_+ +1)^{1/2} \xi_2 \| 
\end{align*}
for every $\xi_1, \xi_2 \in \mathcal{F}_{+}^{\leq N}$ and every $h \in \ell^2 (\Lambda_+^*)$ (the notation $\| h \|$ indicates the $\ell^2$-norm of $h$). The modified creation and annihilation operators $b(h), b^* (h)$ are defined as in (\ref{eq:bb}). \end{lemma}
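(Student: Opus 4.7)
The plan is to compute $[b(h), A]$ explicitly using the modified canonical commutation relations \eqref{eq:ccr-mod} and then to bound each resulting term by Cauchy--Schwarz. Distributing $b(h) = \sum_p\overline{h(p)}\, b_p$ across the four cubic monomials in $A$ (the two terms $b^*_{r+v}b^*_{-r}b^*_{-v}$ and $b^*_{r+v}b^*_{-r}b_v$ from \eqref{eq:Adef}, together with their hermitian conjugates), one immediately obtains $[b(h), b_{-v}b_{-r}b_{r+v}]=0$ since $[b_p,b_q]=0$. For the remaining three monomials, the $\delta$-part of \eqref{eq:ccr-mod} produces three $b^*b^*$-type summands (one from each of the three possible contractions of $b_p$ with the three creation operators in $b^*_{r+v}b^*_{-r}b^*_{-v}$), two $b^*b$-type summands (from $[b_p, b^*_{r+v}b^*_{-r}b_v]$), and one $bb$-type summand (from $[b_p, b^*_v b_{-r}b_{r+v}]$), each multiplied by $(1-\cN_+/N)$; the subleading $N^{-1}a^*_q a_p$ piece of \eqref{eq:ccr-mod} yields the same structures multiplied by $\cN_+/N\leq 1$ and is controlled analogously.

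The key estimation principle is to arrange each term so that all momentum sums ultimately involve only annihilation operators applied to $\xi_1$ or $\xi_2$, since $\sum_p\|b_p\xi\|^2 \leq \|\cN_+^{1/2}\xi\|^2$ while $\sum_p\|b^*_p\xi\|^2$ diverges. For a representative $b^*b^*$ term, say
\[ \frac{1}{\sqrt{N}}\sum_{r,v}\overline{h(r+v)}\,\eta_r\sinh(\eta_v)\, b^*_{-r}b^*_{-v}, \]
I would introduce $b^*(m_r) = \sum_v\overline{h(r+v)}\sinh(\eta_v)\, b^*_{-v}$, which satisfies $\|m_r\|\leq C\|h\|$ thanks to $|\sinh(\eta_v)|\leq C$, and rewrite the expression as $N^{-1/2}\sum_r\eta_r\, b^*_{-r}b^*(m_r)$. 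Then $\langle\xi_1, b^*_{-r}b^*(m_r)\xi_2\rangle = \langle b_{-r}\xi_1, b^*(m_r)\xi_2\rangle$, and Cauchy--Schwarz on the $r$-sum gives
\[ \frac{1}{\sqrt{N}}\sum_r |\eta_r|\,\|b_{-r}\xi_1\|\,\|b^*(m_r)\xi_2\| \leq \frac{C\|h\|}{\sqrt{N}}\Big(\sum_r \eta_r^2\Big)^{1/2}\Big(\sum_r\|b_{-r}\xi_1\|^2\Big)^{1/2}\|(\cN_++1)^{1/2}\xi_2\|, \]
which is bounded by $C\|h\|N^{-1/2}\|(\cN_++1)^{1/2}\xi_1\|\|(\cN_++1)^{1/2}\xi_2\|$ using \eqref{eq:etap} ($\|\eta\|_2\leq C$) and $\sum_r\|b_{-r}\xi_1\|^2\leq\|\cN_+^{1/2}\xi_1\|^2$. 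The remaining two $b^*b^*$ terms (those linear in $-r$ or $-v$) are reduced to the same form after commuting the two $b^*$'s, and the $b^*b$ and $bb$ summands admit entirely analogous treatment.

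The main technical obstacle is the systematic bookkeeping needed to prevent the divergent sum $\sum_p\|b^*_p\xi\|^2$ from ever appearing in the final estimate; this is achieved by carefully moving creation operators next to $\xi_1$ (where they become annihilations) and, in Cauchy--Schwarz splits, transferring a factor $\eta$ or $\sinh(\eta)$ onto any component that would otherwise yield only $\|\xi\|^2$, so that summability follows from $\|\eta\|_2\leq C$. Once this is set up the $1/\sqrt{N}$ factor in \eqref{eq:Adef} carries through unchanged, yielding the stated bound. Finally, the second inequality, for $[b^*(h), A]$, follows from the first by adjunction: since $A^* = -A$ and $b(h)^* = b^*(h)$, a direct computation gives $[b^*(h), A] = [b(h), A]^*$, hence
\[ |\langle\xi_1, [b^*(h), A]\xi_2\rangle| = |\langle\xi_2, [b(h), A]\xi_1\rangle|, \]
and the right-hand side of the first bound is symmetric in $\xi_1, \xi_2$.
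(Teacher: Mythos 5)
Your overall strategy is the one the paper uses: compute $[b(h),A]$ explicitly from the modified CCR, estimate each resulting monomial by Cauchy--Schwarz using $\|\eta\|_{\ell^2}\leq C$ and $\sum_p\|b_p\xi\|^2\leq\|\cN_+^{1/2}\xi\|^2$, and deduce the $b^*(h)$ bound by adjunction from $A^*=-A$. Your count of the surviving contractions is right, and your treatment of the representative term (pairing $b_{-r}$ with $\xi_1$ and $b^*(m_r)$ with $\xi_2$) is correct and clean.

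There is, however, one concrete place where your claim that the remaining terms ``admit entirely analogous treatment'' fails: the $bb$-type summand
\[
\frac{1}{\sqrt N}\sum_{r\in P_H,\,v\in P_L}\overline{h(v)}\,\eta_r\cosh(\eta_v)\,b_{-r}b_{r+v}.
\]
Here the weight $h$ is attached to the momentum $v$, which is carried by neither of the two annihilation operators, and $\cosh(\eta_v)$ is merely bounded, not square-summable. If you send $b^*_{-r}$ onto $\xi_1$ (your one-operator-per-side scheme), the $\xi_1$ factor is only bounded uniformly by $\|(\cN_++1)^{1/2}\xi_1\|$ and contributes no summability in $r$ or $v$; one then checks that no distribution of the weights $h(v),\eta_r,\cosh(\eta_v)$ and of $\|b_{r+v}\xi_2\|$ across a Cauchy--Schwarz makes both index sums converge (the best you get carries a factor $|P_L|^{1/2}\sim N^{3/4}$). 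The fix, which the paper uses, is to keep \emph{both} annihilators on $\xi_2$ -- so that $\sum_{r,v}\|b_{-r}b_{r+v}\,\cdot\,\|^2\leq\|\cN_+\,\cdot\,\|^2$ gives joint summability -- and then to insert $(\cN_++1)^{-1/2}$ next to $\xi_2$, compensating with $(\cN_++1)^{1/2}$ on $\xi_1$, to convert $\|\xi_1\|\,\|(\cN_++1)\xi_2\|$ into the symmetric $\|(\cN_++1)^{1/2}\xi_1\|\,\|(\cN_++1)^{1/2}\xi_2\|$. This half-power rebalancing is a small but genuinely additional idea that your sketch does not contain. A second, more minor gloss: the $-N^{-1}a_q^*a_p$ remainders of the CCR are not ``the same structures multiplied by $\cN_+/N$''; they are quartic monomials whose natural Cauchy--Schwarz bound is $C\|h\|N^{-3/2}\|(\cN_++1)\xi_1\|\,\|(\cN_++1)\xi_2\|$, and you need $\cN_+\leq N$ on $\cF_+^{\leq N}$ to trade the extra full powers of $\cN_+$ for the factor $N$ and land on the stated estimate. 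Both fixes are standard, but they should be stated.
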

\begin{proof}
For $p \in \Lambda^*_+$ let $\gamma_v = \cosh \eta_v$ and $\sigma_v = \sinh \eta_v$. We use the definition (\ref{eq:Adef}) and the commutation relations (\ref{eq:ccr-mod}) to compute 
\begin{align*}
\left[ b_p, A \right]  =& \frac{1}{\sqrt{N}} \sum_{r \in P_H, v \in P_L} \eta_r \sigma_v \left[  \left( 1-\frac{\mathcal{N}_+}{N}\right) \delta_{p,r+v} b_{-r}^* b_{-v}^* -\frac{1}{N} a^*_{r+v}a_p b_{-r}^* b_{-v}^* \right]  \\
&+ \frac{1}{\sqrt{N}} \sum_{r \in P_H, v \in P_L} \eta_r \sigma_v   \left[  b_{r+v}^* \left( 1-\frac{\mathcal{N}_+}{N}\right) \delta_{p,-r} b_{-v}^* -\frac{1}{N}b_{r+v}^* a^*_{-r}a_p  b_{-v}^* \right]  \\
&+ \frac{1}{\sqrt{N}} \sum_{r \in P_H, v \in P_L} \eta_r \sigma_v \left[  b_{r+v}^* b_{-r}^*\left( 1-\frac{\mathcal{N}_+}{N}\right) \delta_{p,-v}  -\frac{1}{N}b_{r+v}^*b_{-r}^* a^*_{-v}a_p   \right]  \\
&+  \frac{1}{\sqrt{N}} \sum_{r \in P_H, v \in P_L} \eta_r \gamma_v \left[  \left( 1-\frac{\mathcal{N}_+}{N}\right) \delta_{p,r+v} b_{-r}^* b_{v} -\frac{1}{N} a^*_{r+v}a_p b_{-r}^* b_{v} \right]   \\
&+ \frac{1}{\sqrt{N}} \sum_{r \in P_H, v \in P_L} \eta_r \gamma_v \left[  b_{r+v}^* \left( 1-\frac{\mathcal{N}_+}{N}\right) \delta_{p,-r} b_{v} -\frac{1}{N}b_{r+v}^* a^*_{-r}a_p  b_{v} \right]  \\ 
&- \frac{1}{\sqrt{N}} \sum_{r \in P_H, v \in P_L} \eta_r \gamma_v  \left[  \left( 1-\frac{\mathcal{N}_+}{N}\right) \delta_{p,v} b_{-r} b_{r+v} -\frac{1}{N} a^*_{v}a_p b_{-r} b_{r+v} \right]  \\ 
=& \frac{1}{\sqrt{N}} \sum_{r \in P_H, v \in P_L} \eta_r \left[  \left(  1- \frac{\mathcal{N}_+}{N} \right)b^*_{-r} \left( \sigma_v b^*_{-v} + \gamma_v b_v \right)  \delta_{p,r+v}  \right. 
\\ &\hspace{3cm} + \left( 1- \frac{\mathcal{N}_+ -1}{N} \right) b_{r+v}^* ( \sigma_v b^*_{-v}  + \gamma_v b_v) \delta_{p,-r} \\
& \hspace{3cm} + \left. \sigma_v b_{r+v}^*b_{-r}^* \left( 1- \frac{\mathcal{N}_+}{N} \right) \delta_{p,-v} - \gamma_v \left( 1- \frac{\mathcal{N}_+}{N} \right) b_{-r} b_{r+v}  \delta_{p,v} \right] \\
&-\frac{1}{N^{3/2}} \sum_{r \in P_H, v \in P_L} \eta_r \left[ ( a_{r+v}^*a_p b^*_{-r} + b^*_{r+v}a^*_{-r}a_p ) (\sigma_v b_{-v}^* + \gamma_v b_v )\right. \\
& \hspace{4cm}\left. + \sigma_v b^*_{r+v}b^*_{-r}a^*_{-v}a_p -\gamma_v a_v^*a_p b_{-r}b_{r+v} \right].
\end{align*}
Thus, we write
\begin{align*}
\left[ b(h), A \right] = \sum_{p \in \Lambda_+^*} \overline{h} (p) \left[ b_p, A \right] = \sum_{j=1}^5 D_j 
\end{align*}
where
\begin{align*}
D_1 :=& \frac{1}{\sqrt{N}} \sum_{r \in P_H, v \in P_L} \overline{h}( r+v) \;  \eta_r  \left(  1- \frac{\mathcal{N}_+}{N} \right)b^*_{-r} \left( \sigma_v b^*_{-v} + \gamma_v b_v \right)  \\
D_2:=&\frac{1}{\sqrt{N}}  \sum_{r \in P_H, v \in P_L} \overline{h}( -r) \; \eta_r \left( 1- \frac{\mathcal{N}_+ -1}{N} \right) b_{r+v}^* ( \sigma_v b^*_{-v}  + \gamma_v b_v) \\
D_3:= & \frac{1}{\sqrt{N}} \sum_{r \in P_H, v \in P_L}\left( \overline{ h}( -v) \; \eta_r \sigma_v b_{r+v}^*b_{-r}^* \left( 1- \frac{\mathcal{N}_+}{N} \right) + \overline{h}( v) \eta_r \gamma_v  \left( 1- \frac{\mathcal{N}_+}{N} \right) b_{-r} b_{r+v} \right)  \\
D_4 :=&- \frac{1}{N^{3/2}}  \sum_{p \in \Lambda_+^*}  \sum_{r \in P_H, v \in P_L}\overline{h}(p) \; \eta_r ( a_{r+v}^*a_p b^*_{-r} + b^*_{r+v}a^*_{-r}a_p ) (\sigma_v b_{-v}^* + \gamma_v b_v ) \\
D_5 :=&  \frac{1}{N^{3/2}}  \sum_{p \in \Lambda_+^*}   \sum_{r \in P_H, v \in P_L}\overline{h}(p) \; \eta_r \left(  \sigma_v b^*_{r+v}b^*_{-r}a^*_{-v}a_p -\gamma_v a_v^*a_p b_{-r}b_{r+v}  \right). 
\end{align*}
In the following we will write, for $j \in \{ 1, 2 , \dots , 5 \}$, $D_j = D_{j,\sigma} + D_{j, \gamma}$, with $D_{j,\sigma}$, $D_{j,\gamma}$ denoting the terms in $D_j$ containing the factor $\sigma_v$ and, respectively, the factor $\gamma_v$. Since, by (\ref{eq:etap}), $| \eta_r | \leq Cr^{-2} , | \sigma_v | \leq Cv^{-2}$, we obtain 
\begin{align*}
\vert\langle \xi_1 ,  D_{1,\sigma} \xi_2 \rangle \vert \leq& \frac{C}{\sqrt{N}} \| (\mathcal{N}_++1)^{1/2} \xi_2\| \sum_{r \in P_H, v \in P_L} |h( r+v)| \;  |\eta_r  | \; | \sigma_v| \;  \| b_{-v}  b_{-r} \left(  1- \frac{\mathcal{N}_+}{N} \right)\left(\mathcal{N}_++1\right)^{-1/2}\xi_1 \| \\
\leq& \frac{C \| h \|_{\infty}}{\sqrt{N}} \| ( \mathcal{N}_+ +1)^{1/2}\xi_2 \| \sum_{r \in P_H, v \in P_L} | \eta_r | \; | \sigma_v| \; \| b_{-v}  b_{-r} \left(\mathcal{N}_+  + 1\right)^{-1/2} \xi_1 \|\\
\leq& \frac{C \| h \|_{\infty}}{\sqrt{N}}  \| ( \mathcal{N}_+ +1)^{1/2}\xi_2 \|\left( \sum_{r \in P_H, v \in P_L}  | \sigma_v|^2 | \eta_r |^2 \right)^{1/2} \\ &\hspace{3cm} \times \left( \sum_{r \in P_H, v \in P_L} \| b_{-v}  b_{-r} \; \left( \mathcal{N}_+ +1\right)^{-1/2}\xi_1 \|^2 \right)^{1/2} \\
\leq&\frac{C \| h \|_\infty}{\sqrt{N}} \| ( \mathcal{N}_+ +1)^{1/2}\xi_1 \| \; \|  \left(\mathcal{N}_++1\right)^{1/2}\xi_2 \|.
\end{align*}
For the term $D_{1,\gamma}$, we estimate 
\begin{align*}
\vert \langle \xi_1 , D_{1,\gamma} \xi_2 \rangle \vert  \leq& \frac{1}{\sqrt{N}} \sum_{r \in P_H, v \in P_L} |h(r+v)|\;  |\eta_r| \; | \gamma_v| \| b_{v} \xi_1 \| \; \|   b_{-r} \left( 1- \frac{\mathcal{N}_+}{N} \right) \xi_2 \|  \\
\leq& \frac{C \| h \|}{\sqrt{N}} \| \mathcal{N}_+^{1/2} \xi_1 \| \; \sum_{r \in P_H}  | \eta_r| \; \| b_{-r} \xi_2 \| \\
\leq&  \frac{C \| h \|}{\sqrt{N}} \| \mathcal{N}_+^{1/2} \xi_1 \| \; \| \mathcal{N}_+^{1/2} \xi_2 \|,
\end{align*}
where we used that $  | \eta_r | \leq Cr^{-2}, \; | \gamma_v | \leq C $.

The term $D_2$ can be bounded in the same way. For the third term $D_3$ we find (using the same arguments) on the one hand
\begin{align*}
\vert \langle \xi_1, D_{3, \sigma} \xi_2 \rangle \vert \leq&  \frac{C}{\sqrt{N}} \| ( \mathcal{N}_+ +1)^{1/2} \xi_2 \| \sum_{r \in P_H, v \in P_L} | h(-v)| \; | \eta_r| \; | \sigma_v| \;  \| b_{-r} b_{r+v} (\mathcal{N}_++1)^{-1/2}\xi_1 \|  \\
 \leq& \frac{C \| h \|_\infty}{\sqrt{N}} \| ( \mathcal{N}_+ +1)^{1/2}\xi_1 \| \; \|( \mathcal{N}_+ +1)^{1/2} \xi_2 \| ,
\end{align*}
and on the other hand
\begin{align*}
\vert \langle \xi_1 , D_{3, \gamma} \xi_2 \rangle \vert \leq& \frac{C}{\sqrt{N}} \| ( \mathcal{N}_+ + 1)^{1/2}\xi_1 \| \sum_{r \in P_H, v \in P_L} | h(v)| \; | \eta_r| \; | \gamma_v| \; \| b_{-r} b_{r+v} (\mathcal{N}_+ + 1)^{-1/2}\xi_2 \| \\ \leq& \frac{C \| h \|}{\sqrt{N}} \|( \mathcal{N}_++1)^{1/2} \xi_	1 \| \; \|  \mathcal{N}_+^{1/2} \xi_2 \|. 
\end{align*}
To control $D_4$, we rearrange creation and annihilation operators in normal order. 
We split
\begin{align*}
D_{4, \sigma} 
=& - \frac{2}{N^{3/2}}  \sum_{p \in \Lambda_+^*}  \sum_{r \in P_H, v \in P_L}\overline{h}(p) \; \eta_r \; \sigma_v \; b_{-v}^* b^*_{-r} a_{r+v}^*a_p  \\
&- \frac{2}{N^{3/2}}   \sum_{r \in P_H, v \in P_L}\overline{h}(-v) \; \eta_r \; \sigma_v   b^*_{-r} b_{r+v}^* \\
&- \frac{1}{N^{3/2}}  \sum_{r \in P_H, v \in P_L}\overline{h}(-r) \; \eta_r \; \sigma_v   b^*_{r+v} b_{-v}^*  = \sum_{i= 1}^3 D_{4, \sigma}^{(i)}.
\end{align*}
The two terms $D_{4, \sigma}^{(2)}$ and $D_{4, \sigma}^{(3)}$ can be bounded like the previous terms. We find 
\begin{align*}
\vert \langle \xi_1 , D_{4, \sigma}^{(j)} \xi_2 \rangle \vert \leq \frac{C \| h \|_\infty }{N^{3/2}} \| ( \mathcal{N}_+ +1)^{1/2} \xi_1 \| \;  \| ( \mathcal{N}_+ +1)^{1/2} \xi_2 \|
\end{align*}
for $j=2,3$. As for the term $D_{4,\sigma}^{(1)}$, we estimate  
\begin{align*}
\vert \langle \xi_1, D_{4, \sigma}^{(1)}\xi_2 \rangle \vert \leq& \frac{2}{N^{3/2}}  \sum_{p \in \Lambda_+^*}  \sum_{r \in P_H, v \in P_L} |h(p)| \; |\eta_r| \; |\sigma_v| \; \| b_{-v} b_{-r} \xi_1 \| \; \| a_p ( \mathcal{N}_++1)^{1/2} \xi_2 \| \\
\leq& \frac{C}{N^{3/2}} \left( \sum_{p \in \Lambda_+^*}\sum_{r \in P_H, v \in P_L} | h(p) |^2 \; \| b_{-v} b_{-r} \xi_1 \|^2 \right)^{1/2} \\ &\hspace{2cm} \times \left( \sum_{p \in \Lambda_+^*} \sum_{r \in P_H, v \in P_L} | \eta_r |^2 \; | \sigma_v|^2 \| ( \mathcal{N}_+ +1) a_p \xi_2 \|^2 \right)^{1/2}\\
\leq& \frac{C \| h \|}{N^{3/2}} \|( \mathcal{N}_++1) \xi_1 \| \; \| ( \mathcal{N}_++1) \xi_2 \| \leq \frac{C \| h \|}{N^{1/2}} \| ( \mathcal{N}_+ +1)^{1/2} \xi_1 \| \; \| ( \mathcal{N}_+ +1)^{1/2} \xi_2 \| .
\end{align*}
Analogously, we split 
\begin{equation}\label{eq:D4g} 
\begin{split} 
D_{4, \gamma} :=&- \frac{1}{N^{3/2}}  \sum_{p \in \Lambda_+^*}  \sum_{r \in P_H, v \in P_L}\overline{h}(p) \; \eta_r \; \gamma_v \; ( b^*_{-r} a_{r+v}^*a_p + b^*_{r+v}a^*_{-r}a_p )  b_v \\
&-  \frac{1}{N^{3/2}}  \sum_{r \in P_H, v \in P_L}\overline{h}(-r) \; \eta_r \; \gamma_v \;  b_{r+v}^* b_v = \sum_{i=1}^2 D_{4, \gamma}^{(i)}.
\end{split}
\end{equation}
For the second term, we find
\begin{align*}
\vert \langle \xi_1, D_{4,\gamma}^{(2)} \xi_2 \rangle \vert \leq& \frac{1}{N^{3/2}}  \sum_{r \in P_H, v \in P_L} |h (-r)| \; |\eta_r| \; |\gamma_v |\; \| b_{r+v} \xi_1\| \; \| b_v \xi_2 \|  \\ \leq& \frac{C \| h \| }{N^{3/2}} \| ( \mathcal{N}_+ +1)^{1/2} \xi_1 \| \; \| ( \mathcal{N}_+ +1)^{1/2} \xi_2 \|, 
\end{align*}
Similarly, the first term on the r.h.s. of (\ref{eq:D4g}) can be bounded by 
\begin{align*}
\vert \langle \xi_1, D_{4,\gamma}^{(1)} \xi_2 \rangle \vert \leq& \frac{2}{N^{3/2}}  \sum_{p \in \Lambda_+^*}  \sum_{r \in P_H, v \in P_L}| \overline{h}(p)| \; |\eta_r| \; |\gamma_v | \;   \| a_{r+v}b_{-r} \xi_1 \| \; \| a_p b_v \xi_2 \|  \\
\leq& \frac{C}{N^{3/2}} \left( \sum_{p \in \Lambda_+^*}\sum_{r \in P_H, v \in P_L} | h(p)|^2 \; \| a_{r+v} b_{-r} \xi_1 \|^2 \right)^{1/2} \left( \sum_{p \in \Lambda_+^*} \sum_{r, \in P_H, v \in P_L} | h( p)|^2 \| a_{p} b_v \xi_2 \| \right)^{1/2} \\
\leq& \frac{C \| h \|}{N^{3/2}} \| ( \mathcal{N}_+ +1) \xi_1 \| \; \| ( \mathcal{N}_+ +1) \xi_2 \| \leq \frac{C \| h \|}{N^{1/2}} \| ( \mathcal{N}_++1)^{1/2} \xi_1 \| \; \| ( \mathcal{N}_+ +1)^{1/2} \xi_2 \| .
\end{align*}
The term $D_5$ can be estimated like the normally ordered contributions in $D_4$. We conclude that 
\begin{align*}
| \langle \xi_1, \left[ b(h), A \right] \xi_2 \rangle |
\leq&  \frac{C \| h \|}{\sqrt{N}}  \| ( \mathcal{N}_+ +1)^{1/2} \xi_1 \| \; \| ( \mathcal{N}_+ +1)^{1/2} \xi_2 \| . 
\end{align*}
Since $\left[b^*(h), A \right] = \left[b(h) , A \right]^*$ (because $A$ is skew-symmetric), the second inequality in Lemma \ref{lemma:commA} follows from the first one. 
\end{proof} 

The next lemma will also be useful to pass powers of the number of particles operator $\cN_+$ through modified Weyl operators (whose action is not explicit). 
\begin{lemma}\label{lm:weyl}
For every $j \in \bN$ there exists a constant $C > 0$ such that 
\[ \langle \xi, e^{-i \phi (h)} (\cN_+ + \alpha)^j e^{i \phi (h)} \xi \rangle \leq C \langle \xi, (\cN_+ + \alpha  + \| h \|^2)^j \xi \rangle \]
for all $\alpha \geq 1$, $\xi \in \cF_+^{\leq N}$, $h \in \ell^2 (\Lambda^*_+)$. Here we used the notation $\phi (h) = b(h) + b^* (h)$, with the modified creation and annihilation operators defined in (\ref{eq:bb}). 
\end{lemma}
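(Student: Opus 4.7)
The natural approach is a Duhamel/Gronwall argument along the family $\xi_s := e^{is \phi(h)} \xi$, $s \in [0,1]$, estimating
\[ g(s) \,:=\, \langle \xi_s, (\cN_+ + c)^j \xi_s\rangle, \qquad c := \alpha + \|h\|^2. \]
The shift by $\|h\|^2$ inside the power is exactly what appears in the claimed bound, and is precisely what is needed to absorb the factors of $\|h\|$ produced when one commutes $\phi(h)$ past $(\cN_+ + c)^{j/2}$; the lemma then follows since $(\cN_+ + \alpha)^j \leq (\cN_+ + c)^j$.

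First I would compute the commutator. Since $[\cN_+, b_p] = -b_p$ and $[\cN_+, b_p^*] = b_p^*$, iteration yields the intertwining identities $(\cN_+ + c)^j b(h) = b(h)(\cN_+ + c - 1)^j$ and $(\cN_+ + c)^j b^*(h) = b^*(h)(\cN_+ + c + 1)^j$, so that
\[ [\phi(h), (\cN_+ + c)^j] \,=\, b(h)\, P(\cN_+ + c - 1) \,-\, b^*(h)\, P(\cN_+ + c), \]
where $P(x) := (x+1)^j - x^j$ is a polynomial of degree $j-1$ satisfying $0 \leq P(x) \leq C_j (x+1)^{j-1}$ for $x \geq 0$. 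Hence $g'(s) = i\langle \xi_s, [\phi(h),(\cN_+ + c)^j]\xi_s\rangle$.

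Next I would bound each term. For the first, Cauchy--Schwarz together with the intertwining $P^{1/2}(\cN_+ + c - 1)\, b^*(h) = b^*(h)\, P^{1/2}(\cN_+ + c)$ gives
\[ \bigl|\langle \xi_s, b(h)\, P(\cN_+ + c - 1)\xi_s\rangle\bigr| \,\leq\, \|b^*(h) P^{1/2}(\cN_+ + c)\xi_s\| \cdot \|P^{1/2}(\cN_+ + c - 1)\xi_s\|. \]
Using $\|b^*(h)\eta\| \leq \|h\|\,\|(\cN_+ + 1)^{1/2}\eta\|$, the bound $P(x) \leq C_j (x+1)^{j-1}$, and $\cN_+ + 1 \leq \cN_+ + c$ (valid since $c \geq \alpha \geq 1$), the first factor is controlled by $C_j \|h\|\, g(s)^{1/2}$. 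For the second factor, $(\cN_+ + c)^{j-1} \leq c^{-1}(\cN_+ + c)^j$ yields $\|P^{1/2}(\cN_+ + c - 1)\xi_s\| \leq C_j\, g(s)^{1/2} / c^{1/2}$. Multiplying gives $C_j \|h\|\, g(s)/c^{1/2}$, and the crucial scalar inequality $\|h\|^2 \leq c$ (by definition of $c$) means $\|h\|/c^{1/2} \leq 1$, so this term is $\leq C_j\, g(s)$. The term with $b^*(h)\, P(\cN_+ + c)$ is bounded identically using $\|b(h)\eta\| \leq \|h\|\|\cN_+^{1/2}\eta\|$. Together $|g'(s)| \leq C_j\, g(s)$ on $[0,1]$, and Gronwall yields $g(1) \leq e^{C_j} g(0)$, which is the claim.

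The main technical point --- and the reason for the $\|h\|^2$ shift in the definition of $g$ --- is that the commutator inevitably produces one factor of $\|h\|$ (from $b(h)$ or $b^*(h)$) alongside a polynomial of degree $j-1$ in $\cN_+$, and this single $\|h\|$ must be compensated by one extra power of $(\cN_+ + c)^{1/2}$; the scalar inequality $\|h\|^2 \leq c$ is precisely what supplies this compensation. Without the $\|h\|^2$ shift one would be left with a bound of the form $\|h\|\, g(s)/\alpha^{1/2}$ --- not uniform in $\|h\|$ --- and the Gronwall loop would not close.
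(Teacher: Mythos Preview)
Your proof is correct and follows the same overall strategy as the paper: differentiate the expectation of $(\cN_+ + \text{const})^j$ along the flow $e^{is\phi(h)}$, control the commutator using $\|b^{\sharp}(h)\eta\|\le \|h\|\|(\cN_++1)^{1/2}\eta\|$, and close with Gronwall.

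The one genuine difference is where the shift by $\|h\|^2$ enters. The paper differentiates $f(t)=\langle \xi_t,(\cN_++\alpha)^j\xi_t\rangle$ with the \emph{unshifted} constant $\alpha$, bounds the commutator by $C\|h\|\langle\xi_t,(\cN_++\alpha)^{j-1/2}\xi_t\rangle$, and then invokes H\"older (together with the normalization $\|\xi\|=1$) to get the \emph{nonlinear} inequality $|f'(t)|\le C\|h\|f(t)^{1-1/(2j)}$; integrating this produces the additive term $\|h\|^{2j}$ at the end. You instead build the shift $c=\alpha+\|h\|^2$ into $g$ from the start, so that the crucial inequality $\|h\|^2\le c$ converts the commutator bound directly into the \emph{linear} inequality $|g'(s)|\le C_j\,g(s)$. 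Your variant is slightly cleaner: it avoids the interpolation step and the normalization assumption, and makes transparent why the $\|h\|^2$ shift is exactly the right one.
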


\begin{proof}
For a fixed $\xi \in \cF_+^{\leq N}$ with $\| \xi \| =1$, we define the function $f: \mathbb{R} \rightarrow \mathbb{R}$ through
\begin{align*}
f (t) = \langle \xi,  e^{-it \phi (h)} (\cN_+ + \alpha)^j e^{i t \phi (h)} \xi \rangle  
\end{align*}
We compute 
\begin{align*}
i f' (t) =& \langle \xi, e^{-it \phi (h)} \left[ ( \mathcal{N}_+ + \alpha)^j, \phi (h)  \right] e^{i t \phi (h)} \xi \rangle \\
=& \sum_{\ell=0}^{j-1} \langle \xi, e^{-it\phi (h)} ( \mathcal{N}_+ + \alpha)^{\ell} \left[ \mathcal{N}_+  , \phi (h) \right] ( \mathcal{N} +\alpha)^{j-1-\ell} e^{it \phi (h)} \xi \rangle.
\end{align*}
The commutation relations (\ref{eq:ccr-mod}) imply that
\begin{align*}
i f' (t) =&  - \sum_{\ell=0}^{j-1} \langle \xi, e^{-it\phi (h)} ( \mathcal{N}_+ + \alpha)^{\ell} \; b(h) ( \mathcal{N}_+ + \alpha)^{j-1-\ell}  e^{it \phi (h)} \xi \rangle \\
& + \sum_{\ell=0}^{j-1} \langle \xi, e^{-it \phi (h)} ( \mathcal{N}_+ + \alpha)^{\ell}\; b^*(h) ( \mathcal{N}_+ + \alpha)^{j-1-\ell} e^{it \phi (h)}  \xi \rangle.
\end{align*}
With $b(h) \cN_+ = (\cN_+ + 1) b(h)$ and $b^* (h) \cN_+ = (\cN_+ - 1) b^* (h)$, we find a constant $C$ (depending only on $j$) such that 
\[ |f' (t)| \leq C \| h \| \langle e^{it \phi (h)} \xi , (\cN_+ + \alpha)^{j-1/2} e^{it \phi (h)} \xi \rangle \leq C \| h \| f(t)^{1-\frac{1}{2j}} \]
where we used the condition $\alpha > 1$ and, in the last inequality, the normalization $\| \xi \|=1$. Gronwall's lemma implies that 
\[ |f(t)| \leq C (f(0) +\| h \|^{2j} t^{2j}) \]
Thus
\[ \langle e^{i \phi (h)} \xi, (\cN_+ + \alpha)^j e^{i \phi (h)} \xi \rangle \leq C \langle \xi, (\cN_+ +\alpha + \| h \|^2)^j \xi \rangle \,. \]
\end{proof}

We are now ready to prove our main result.
\begin{proof}[Proof of Theorem \ref{thm:main}]
We write
\begin{align*}
\mathbb{E}_{\psi_N}\left[ g_1(\mathcal{O}_{1}) \dots g_k ( \mathcal{O}_{k}) \right] &= \langle \psi_{N} , g_1( \mathcal{O}_{1} ) \dots g_k ( \mathcal{O}_{k} ) \psi_{N} \rangle \\
&= \int ds_1 \dots ds_k \; \widehat{g}_1(s_1) \dots \widehat{g}_k( s_k) \;  \langle \psi_{N}, e^{i s_1 \mathcal{O}_{1}} \dots e^{i s_k \mathcal{O}_{k}}\psi_{N} \rangle 
\end{align*}
With (\ref{eq:psi-app}), we obtain that 
\begin{equation}\label{eq:pro1} \begin{split} 
&\left| \mathbb{E}_{\psi_N}\left[ g_1(\mathcal{O}_{1}) \dots g_k ( \mathcal{O}_{k}) \right] - \int ds_1 \dots ds_k \; \widehat{g}_1(s_1) \dots \widehat{g}_k( s_k) \;   \langle U_N^* T_\eta S T_\tau \Omega , e^{i s_1 \mathcal{O}_{1}} \dots e^{i s_k \mathcal{O}_{k}}  U_N^* T_\eta S T_\tau \Omega \rangle \right| \\ &\hspace{12cm} \leq C N^{-1/4} \prod_{j=1}^k \| \widehat{g}_j \|_1  \end{split}  \end{equation}
We split the computation of the expectation $\langle U_N^* T_\eta S T_\tau \Omega , e^{i s_1 \mathcal{O}_{1}} \dots e^{i s_k \mathcal{O}_{k}}  U_N^* T_\eta S T_\tau \Omega \rangle$ in several steps. 

\medskip

{\it Step 1. Action of $U_N$.} We set $\xi_1 = T_\eta S T_\tau \Omega \in \cF_+^{\leq N}$. Then we have 
\begin{equation}\label{eq:step1} \begin{split} 
&\left| \langle U_N^* \xi_1, e^{i s_1 \mathcal{O}_{1}} \dots e^{i s_k \mathcal{O}_{k}}  U_N^* \xi_1 \rangle  - \langle \xi_1, e^{is_1 \phi (q_0 O_1 \ph_0)} \dots e^{is_k \phi (q_0 O_k \ph_0)} \xi_1 \rangle \right| \\ &\hspace{5cm} \leq \frac{C}{\sqrt{N}} \left( \sum_{j=1}^k \| O_j \| |s_j| \right) \left( 1 + \sum_{j=1}^k \| q_0 O_j \ph_0 \|^2 |s_j|^2 \right) \end{split} \end{equation}
Here we use the notation $\phi (h) = b(h) + b^* (h)$, with the modified creation and annihilation operators introduced in (\ref{eq:bb}).

\medskip

With $\wt{O}_j = O_j - \langle \ph_0, O_j \ph_0\rangle$ and denoting $d\Gamma (A) = \sum_{\ell=1}^N A_\ell$, we can decompose  
\[ \cO_j = \frac{1}{\sqrt{N}} \left[ d\Gamma (q_0 \wt{O}_j q_0) + d\Gamma (q_0 O_j p_0) + d\Gamma (p_0 O_j q_0) \right] .\]
With (\ref{eq:rules}), we obtain 
\[ U_N^* \cO_j U_N = \frac{1}{\sqrt{N}} d\Gamma (q_0 \wt{O}_j q_0) + \phi (q_0 O_j \ph_0) \]
Hence, we find 
\[ \langle U_N^* \xi_1, e^{i s_1 \mathcal{O}_{1}} \dots e^{i s_k \mathcal{O}_{k}}  U_N^* \xi_1 \rangle = \langle  \xi_1, \prod_{j=1}^k e^{i s_j \left[ \frac{1}{\sqrt{N}} d\Gamma (q_0 \wt{O}_j q_0) + \phi (q_0 O_j \ph_0) \right]} \xi_1 \rangle \]
and 
\[ \begin{split}  \langle U_N^* \xi_1, &e^{i s_1 \mathcal{O}_{1}} \dots e^{i s_k \mathcal{O}_{k}}  U_N^* \xi_1 \rangle - \langle \xi_1, e^{is_1 \phi (q_0 O_1 \ph_0)} \dots e^{is_k \phi (q_0 O_k \ph_0)} \xi_1 \rangle \\ &= \sum_{m=1}^k \langle \xi_1, \prod_{j=1}^{m-1} e^{i s_j \left[ \frac{1}{\sqrt{N}} d\Gamma (q_0 \wt{O}_j q_0) + \phi (q_0 O_j \ph_0) \right]} \\ &\hspace{2cm} \times \left( e^{i s_m \left[ \frac{1}{\sqrt{N}} d\Gamma (q_0 \wt{O}_m q_0) + \phi (q_0 O_m \ph_0) \right]} -  e^{i s_m \phi (q_0 O_m \ph_0)} \right)  \prod_{j=m+1}^{k} e^{i s_j  \phi (q_0 O_j \ph_0)} \xi_1 \rangle 
\end{split} \] 
Writing the difference in the parenthesis as an integral, we arrive at
\[ \begin{split}  \langle U_N^* \xi_1, &e^{i s_1 \mathcal{O}_{1}} \dots e^{i s_k \mathcal{O}_{k}}  U_N^* \xi_1 \rangle - \langle \xi_1, e^{is_1 \phi (q_0 O_1 \ph_0)} \dots e^{is_k \phi (q_0 O_k \ph_0)} \xi_1 \rangle \\ &= \frac{i}{\sqrt{N}} \sum_{m=1}^k \int_0^{s_m} dt \, \langle \xi_1, \prod_{j=1}^{m-1} e^{i s_j \left[ \frac{1}{\sqrt{N}} d\Gamma (q_0 \wt{O}_j q_0) + \phi (q_0 O_j \ph_0) \right]} \\ &\hspace{1cm} \times 
e^{i (s_m -t) \left[ \frac{1}{\sqrt{N}} d\Gamma (q_0 \wt{O}_m q_0) + \phi (q_0 O_m \ph_0) \right]}  d\Gamma (q_0 \wt{O}_j q_0) e^{i t \phi (q_0 O_m \ph_0)} \prod_{j=m+1}^{k} e^{i s_j  \phi (q_0 O_j \ph_0)} \xi_1 \rangle 
\end{split} \] 
Since $\| d\Gamma (A) \xi \| \leq \| A \| \| \cN_+ \xi \|$, we conclude that 
\[ \begin{split}  &\left| \langle U_N^* \xi_1, e^{i s_1 \mathcal{O}_{1}} \dots e^{i s_k \mathcal{O}_{k}}  U_N^* \xi_1 \rangle - \langle \xi_1, e^{is_1 \phi (q_0 O_1 \ph_0)} \dots e^{is_k \phi (q_0 O_k \ph_0)} \xi_1 \rangle \right| \\ & \hspace{1cm} \leq \frac{1}{\sqrt{N}} \sum_{m=1}^k \| q_0 \wt{O}_m q_0 \| \int_0^{s_m} dt \,  \Big\| \cN_+ e^{i t \phi (q_0 O_m \ph_0)} \prod_{j=m+1}^{k} e^{i s_j  \phi (q_0 O_j \ph_0)} \xi_1 \Big\| 
\end{split} \] 
With Lemma \ref{lm:weyl}, we find
\[ \begin{split}  &\left| \langle U_N^* \xi_1, e^{i s_1 \mathcal{O}_{1}} \dots e^{i s_k \mathcal{O}_{k}}  U_N^* \xi_1 \rangle - \langle \xi_1, e^{is_1 \phi (q_0 O_1 \ph_0)} \dots e^{is_k \phi (q_0 O_k \ph_0)} \xi_1 \rangle \right| \\ & \hspace{4cm} \leq \frac{C}{\sqrt{N}} \sum_{m=1}^k |s_m| \| q_0 \wt{O}_m q_0 \|  \,  \Big\| \Big(\cN_+  + 1 + \sum_{j=m}^k s_j^2 \| q_0 O_j \ph_0 \|^2 \Big) \xi_1 \Big\| 
\end{split} \] 
Using (\ref{eq:TNT}), (\ref{eq:SNS}) and (\ref{eq:TtNTt}), we conclude that $\langle \xi_1, \cN_+^2 \xi_1 \rangle \leq C$, uniformly in $N$. Therefore 
\[ \begin{split}  &\left| \langle U_N^* \xi_1, e^{i s_1 \mathcal{O}_{1}} \dots e^{i s_k \mathcal{O}_{k}}  U_N^* \xi_1 \rangle - \langle \xi_1, e^{is_1 \phi (q_0 O_1 \ph_0)} \dots e^{is_k \phi (q_0 O_k \ph_0)} \xi_1 \rangle \right| \\ & \hspace{4cm} \leq \frac{C}{\sqrt{N}} \sum_{m=1}^k |s_m| \| q_0 \wt{O}_m q_0 \|  \,  \Big( 1 + \sum_{j=m}^k s_j^2 \| q_0 O_j \ph_0 \|^2 \Big)  
\end{split} \] 
which implies (\ref{eq:step1}). 

\medskip

{\it Step 2. Action of $T_\eta$.} Let $\xi_2 = ST_\tau \Omega$, so that $\xi_1 = T_\eta \xi_2$. Then we have 
\begin{equation}\label{eq:step2} \begin{split} &\left| \langle \xi_1, e^{is_1 \phi (q_0 O_1 \ph_0)} \dots e^{is_k \phi (q_0 O_k \ph_0)} \xi_1 \rangle - \langle \xi_2, e^{is_1 \phi (h_1)} \dots e^{is_k \phi (h_k)} \xi_2 \rangle \right| \\ & \hspace{6cm} \leq \frac{C}{N} \left( \sum_{m=1}^k |s_m| \| q_0 O_j \ph_0 \| \right) \left( 1 + \sum_{j=1}^k s_j^2 \| q_0 O_j \ph_0 \|^2 \right)^{3/2} \end{split} \end{equation}
where $h_j (p) =  \widehat{(q_0 O_j \ph_0)} (p) \cosh (\eta_p) + 
\widehat{(\overline{q_0 O_j \ph_0})} (p) \sinh (\eta_p)$, for all $j=1,\dots, k$. 

\medskip

To prove (\ref{eq:step2}), we use (\ref{eq:act-T}) to write 
\begin{equation}\label{eq:def-h} \begin{split} T_\eta^* &\phi (q_0 O_j \ph_0) T_\eta \\ = \; &\sum_{p \in \Lambda^*_+} \widehat{(q_0 O_j \ph_0)} (p) \left[ \cosh (\eta_p)  b_p^* + \sinh (\eta_p) b_{-p} + d_p^* \right] +  \overline{\widehat{(q_0 O_j \ph_0)} (p)} \left[ \cosh (\eta_p) b_p + \sinh (\eta_p) b_{-p}^* + d_p \right] \\ = \; &\phi (h_j) + D_j \end{split} \end{equation}
where 
\[ D_j = \sum_{p \in \Lambda^*_+} \widehat{(q_0 O_j \ph_0)} (p) d_p^* + \overline{\widehat{(q_0 O_j \ph_0)} (p)} d_p \]
We have
\[ \begin{split} \langle  \xi_1, e^{is_1 \phi (q_0 O_1 \ph_0)} \dots e^{is_k \phi (q_0 O_k \ph_0)} \xi_1 \rangle &=  \langle  \xi_2, T_\eta^* e^{is_1 \phi (q_0 O_1 \ph_0)} \dots e^{is_k \phi (q_0 O_k \ph_0)} T_\eta \xi_2 \rangle
 \\ &=  \langle  \xi_2,  e^{is_1 \left[ \phi (h_1) + D_1\right] } \dots e^{is_k \left[ \phi (h_k) + D_k\right]} \xi_2 \rangle 
 \end{split} \]
and
\[ \begin{split} \langle  \xi_1, &e^{is_1 \phi (q_0 O_1 \ph_0)} \dots e^{is_k \phi (q_0 O_k \ph_0)} \xi_1 \rangle - \langle \xi_2, e^{is_1 \phi (h_1)} \dots e^{is_k \phi (h_k)} \xi_2 \rangle 
\\ &= \sum_{m=1}^k \langle \xi_2, \prod_{j=1}^{m-1} e^{is_j \left[ \phi (h_j) + D_j\right] }  \left( e^{is_m \left[ \phi (h_m) + D_m\right] } - e^{is_m \phi (h_m)} \right) \prod_{j=m+1}^k  e^{is_j \phi (h_j)} \xi_2 \rangle \\
&= i \sum_{m=1}^k \int_0^{s_m} dt \, 
\langle \xi_2, \prod_{j=1}^{m-1} e^{is_j \left[ \phi (h_j) + D_j\right] }  e^{i (s_m -t) \left[ \phi (h_m) + D_m \right] } D_m e^{i t \phi (h_m)} \prod_{j=m+1}^k  e^{is_j \phi (h_j)} \xi_2 \rangle\end{split} \]
It follows that 
\[ \begin{split} &\left| \langle  \xi_1, e^{is_1 \phi (q_0 O_1 \ph_0)} \dots e^{is_k \phi (q_0 O_k \ph_0)} \xi_1 \rangle - \langle \xi_2, e^{is_1 \phi (h_1)} \dots e^{is_k \phi (h_k)} \xi_2 \rangle \right| 
\\ &\hspace{2cm} \leq \sum_{m=1}^k \sum_{p\in \Lambda^*_+} |\widehat{q_0 O_m \ph_0} (p)| 
\int_0^{s_m} dt \, \Big[ \Big\| d_p e^{i t \phi (h_m)} \prod_{j=m+1}^k  e^{is_j \phi (h_j)} \xi_2 \Big\| \\ &\hspace{7cm}+  \Big\| d_p e^{-i (s_m-t) \left[ \phi (h_m) + D_m \right]} \prod_{j=1}^{m-1}  e^{-is_j \left[ \phi (h_j) + D_j \right]} \xi_2 \Big\| \Big]  \end{split} \]
With (\ref{eq:dp-bd}), we obtain 
\[ \begin{split} &\left| \langle  \xi_1, e^{is_1 \phi (q_0 O_1 \ph_0)} \dots e^{is_k \phi (q_0 O_k \ph_0)} \xi_1 \rangle - \langle \xi_2, e^{is_1 \phi (h_1)} \dots e^{is_k \phi (h_k)} \xi_2 \rangle \right| 
\\ & \hspace{1cm} \leq \frac{C}{N} \sum_{m=1}^k  \| q_0 O_m \ph_0 \| \int_0^{s_m} dt \, \Big[ \Big\| (\cN_+ + 1)^{3/2} e^{i t \phi (h_m)} \prod_{j=m+1}^k  e^{is_j \phi (h_j)} \xi_2 \Big\|  \\ &\hspace{6cm} + \Big\|  (\cN_+ + 1)^{3/2} e^{-i (s_m-t) \left[ \phi (h_m) + D_m \right]} \prod_{j=1}^{m-1}  e^{-is_j \left[ \phi (h_j) + D_j \right]} \xi_2 \Big\| \Big] \end{split} \]
Noticing that $\| h_j \| \leq C \| q_0 O_j \ph_0 \|$ for all $j =1, \dots , k$, and that 
\[ \begin{split} 
\Big\|  (\cN_+ + 1)^{3/2} e^{-i (s_m-t) \left[ \phi (h_m) + D_m \right]} &\prod_{j=1}^{m-1}  e^{-is_j \left[ \phi (h_j) + D_j \right]} \xi_2 \Big\| \\ &= \Big\|  (\cN_+ + 1)^{3/2} T_\eta^* e^{-i (s_m-t) \phi (q_0 O_m \ph_0)} \prod_{j=1}^{m-1}  e^{-is_j \phi (q_0 O_j \ph_0)} T_\eta S T_\tau \Omega \Big\| \end{split} \]
and using (\ref{eq:TNT}), (\ref{eq:SNS}), (\ref{eq:TtNTt}) and Lemma \ref{lm:weyl}, we obtain 
\[  \begin{split} &\left| \langle  \xi_1, e^{is_1 \phi (q_0 O_1 \ph_0)} \dots e^{is_k \phi (q_0 O_k \ph_0)} \xi_1 \rangle - \langle \xi_2, e^{is_1 \phi (h_1)} \dots e^{is_k \phi (h_k)} \xi_2 \rangle \right| 
\\ & \hspace{6cm} \leq \frac{C}{N} \sum_{m=1}^k  \| q_0 O_m \ph_0 \| |s_m| \left( 1 + \sum_{j=1}^k s_j^2 \| q_0 O_j \ph_0 \|^2 \right)^{3/2} \end{split} \]
concluding the proof of (\ref{eq:step2}). 

\medskip

{\it Step 3. Action of $S$.} Let $\xi_3 = T_\tau \Omega$, so that $\xi_2= S \xi_3$. Then
\begin{equation}\label{eq:step3}
\begin{split}  &\left| \langle \xi_2, e^{is_1 \phi (h_1)} \dots e^{i s_k \phi (h_k)} \xi_2 \rangle - \langle \xi_3, e^{is_1 \phi (h_1)} \dots e^{i s_k \phi (h_k)} \xi_3 \rangle \right| \\ &\hspace{6cm} \leq \frac{C}{\sqrt{N}} \sum_{m=1}^k  \| q_0 O_m \ph_0 \| |s_m| \left(1 + \sum_{j=1}^k |s_j|^2 \| q_0 O_j \ph_0 \|^2 \right) \end{split} \end{equation}

\medskip

To show (\ref{eq:step3}), we expand the action of $S = e^{-A}$, with $A$ defined as in (\ref{eq:Adef}). We obtain  
\[ S^* \phi (h_j) S = \phi (h_j) + \int_0^1  e^{\kappa A} [A, \phi (h_j)] e^{-\kappa A} d\kappa =: \phi (h_j) + E_j \]
We have 
\[ \langle \xi_2, e^{is_1 \phi (h_1)} \dots e^{i s_k \phi (h_k)} \xi_2 \rangle  = 
\langle \xi_3, e^{is_1 \left[ \phi (h_1) + E_1 \right]} \dots e^{i s_k \left[ \phi (h_k) + E_k \right] } \xi_3 \rangle\]
and therefore
\[ \begin{split} 
\langle \xi_2, e^{is_1 \phi (h_1)} & \dots e^{i s_k \phi (h_k)} \xi_2 \rangle - \langle \xi_3, e^{is_1 \phi (h_1)} \dots e^{i s_k \phi (h_k)} \xi_3 \rangle \\ = \; &\sum_{m=1}^k \langle \xi_3, \prod_{j=1}^{m-1} e^{is_j \left[ \phi (h_j) + E_j \right]} \left( e^{is_m \left[ \phi (h_m) + E_m \right]} - e^{is_m \phi (h_m)} \right) \prod_{j=m+1}^k e^{is_j \phi (h_j)} \xi_3 \rangle \\
= \; &i \sum_{m=1}^k \int_0^{s_m} dt  \, \langle \xi_3, \prod_{j=1}^{m-1} e^{is_j \left[ \phi (h_j) + E_j \right]} e^{i (s_m -t) \left[ \phi (h_m) + E_m \right]}  E_m e^{it \phi (h_m)} \prod_{j=m+1}^k e^{is_j \phi (h_j)} \xi_3 \rangle \\
= \; &i \sum_{m=1}^k \int_0^{s_m} dt \int_0^1 d\kappa  \, \langle \xi_3, \prod_{j=1}^{m-1} e^{is_j \left[ \phi (h_j) + E_j \right]} e^{i (s_m -t) \left[ \phi (h_m) + E_m \right]}  e^{\kappa A} [A, \phi (h_m)] e^{-\kappa A} \\ &\hspace{9cm} \times e^{it \phi (h_m)} \prod_{j=m+1}^k e^{is_j \phi (h_j)} \xi_3 \rangle
\end{split} \]
With $[A, \phi (h_m)] = [A, b(h_m)] + [A,b^* (h_m)]$ and with Lemma \ref{lemma:commA}, we find 
\[ \begin{split} 
&\left| \langle \xi_2, e^{is_1 \phi (h_1)}  \dots e^{i s_k \phi (h_k)} \xi_2 \rangle - \langle \xi_3, e^{is_1 \phi (h_1)} \dots e^{i s_k \phi (h_k)} \xi_3 \rangle \right| 
\\ &\hspace{1cm} \leq \frac{C}{\sqrt{N}} \sum_{m=1}^k  \| h_m \| \int_0^{s_m} dt  \int_0^1 d\kappa 
\,\Big\| (\cN_+ + 1)^{1/2} e^{-\kappa A} e^{it \phi (h_m)} \prod_{j=m+1}^k e^{is_j \phi (h_j)} \xi_3 \Big\|  \\ 
&\hspace{4cm} \times  \Big\| (\cN_+ + 1)^{1/2} e^{-\kappa A} e^{-i (s_m-t) \left[ \phi (h_m) + E_m \right]} \prod_{j=1}^{m-1} e^{-is_j \left[ \phi (h_j) + E_j \right]} \xi_3 \Big\| \end{split} \] 
Noticing that 
\[ \begin{split} 
\Big\| (\cN_+ + 1)^{1/2} e^{-\kappa A} &e^{-i (s_m-t)  \left[ \phi (h_m) + E_m \right]} \prod_{j=1}^{m-1} e^{-is_j \left[ \phi (h_j) + E_j \right]} \xi_3 \Big\| \\ &= \Big\| (\cN_+ + 1)^{1/2} e^{-\kappa A} S^* e^{-i (s_m-t) \phi (h_m)} \prod_{j=1}^{m-1} e^{-is_j \phi (h_j)} S T_\tau \Omega \Big\| \end{split}  \]
and using (\ref{eq:SNS}), Lemma \ref{lm:weyl} and $\| h_j \| \leq C \| q_0 O_j \ph_0 \|$ for all $j=1,\dots, k$, we arrive at
\[ \begin{split} 
&\left| \langle \xi_2, e^{is_1 \phi (h_1)}  \dots e^{i s_k \phi (h_k)} \xi_2 \rangle - \langle \xi_3, e^{is_1 \phi (h_1)} \dots e^{i s_k \phi (h_k)} \xi_3 \rangle \right| \\ &\hspace{6cm} \leq \frac{C}{\sqrt{N}} \sum_{m=1}^k  \| q_0 O_m \ph_0 \| |s_m| \left(1 + \sum_{j=1}^k |s_j|^2 \| q_0 O_j \ph_0 \|^2 \right) \end{split} \]

\medskip

{\it Step 4. Action of $T_\tau$.} Recall that $\xi_3 = T_\tau \Omega$. We have
\begin{equation}\label{eq:step4} \begin{split}  &\left| \langle \xi_3 , e^{is_1 \phi (h_1)} \dots e^{is_k \phi (h_k)} \xi_3 \rangle - \langle \Omega, e^{is_1 \phi (\nu_1)} \dots e^{is_k \phi (\nu_k)} \Omega \rangle \right| \\ &\hspace{5cm} \leq  \frac{C}{N} \sum_{m=1}^k |s_m| \| q_0 O_m \ph_0 \| \left( 1 + \sum_{j=1}^k s_j^2 \| q_0 O_j \ph_0\|^2 \right)^{3/2}  \end{split}  \end{equation}
where $\nu_j (p) =  \widehat{(q_0 O_j \ph_0)} (p) \cosh (\mu_p) + 
\widehat{(\overline{q_0 O_j \ph_0})} (p) \sinh (\mu_p)$ for $j=1, \dots , k$, and \begin{equation}\label{eq:mup} \mu_p = \frac{1}{4} \log \left( \frac{p^2}{p^2 + 16 \pi \frak{a}_0} \right).\end{equation}
 
 \medskip
 
 From (\ref{eq:act-T}), with $\eta$ replaced by $\tau$, we find, similarly to (\ref{eq:def-h}), 
 \[ T_\tau^* \phi (h_j) T_\tau = \phi (\wt{\nu}_j) + D_j \]
 where
\[ D_j = \sum_{p \in \Lambda^*_+} h_j (p) d_p^* + \overline{h_j (p)} d_p \]
and 
 \[ \begin{split} \wt{\nu}_j (p) = \; &h_j (p) \cosh (\tau_p) + \overline{h_j (-p)} \sinh (\tau_p) 
 \\ = \; &\left[ (\widehat{q_0 O_j \ph_0})(p) \cosh (\eta_p) + (\widehat{\overline{q_0 O_j \ph_0}}) (p) \sinh (\eta_p) \right] \cosh (\tau_p)  \\ &+ \left[ (\widehat{\overline{q_0 O_j \ph_0}})(p) \cosh (\eta_p) + (\widehat{q_0 O_j \ph_0}) (p) \sinh (\eta_p) \right] \sinh (\tau_p) \\
 = \; &(\widehat{q_0 O_j \ph_0})(p) \cosh (\eta_p + \tau_p) + (\widehat{\overline{q_0 O_j \ph_0}})(p) \sinh (\eta_p + \tau_p) \end{split} \]
With (\ref{eq:taup-def}), an explicit computation shows that 
\[ \eta_p + \tau_p = \frac{1}{4} \log \left( \frac{p^2}{p^2 + 2 \, \widehat{Vf_\ell} (p/N)} \right) \]
In particular, this implies that there exists a constant $C > 0$ with $|\eta_p + \tau_p| \leq C$ for all $p \in \Lambda^*_+$. Thus $\| \wt{\nu}_j \| \leq C \| q_0 O_j \ph_0 \|$ for all $j =1, \dots, k$. Proceeding as in Step 2, we therefore obtain 
\begin{equation}\label{eq:step4-1}  \begin{split} & \left| \langle \xi_3 , e^{is_1 \phi (h_1)} \dots e^{is_k \phi (h_k)} \xi_3 \rangle - \langle \Omega, e^{is_1 \phi (\wt{\nu}_1)} \dots e^{is_k \phi (\wt{\nu}_k)} \Omega \rangle \right| \\ &\hspace{6cm} \leq
\frac{C}{N} \sum_{m=1}^k |s_m| \| q_0 O_m \ph_0 \| \left( 1 + \sum_{j=1}^k s_j^2 \| q_0 O_j \ph_0\|^2 \right)^{3/2} \end{split} \end{equation}
Next, we observe that 
\[ |\widehat{V f_\ell} (p/N) - 8 \pi \frak{a}_0 | \leq \left| \int V(x) f_\ell (x) (e^{ip \cdot x /N} - 1) dx  \right| + \left| \int V(x) f_\ell (x) dx - 8\pi \frak{a}_0 \right| \leq \frac{C (|p|+1)}{N} \]
Thus, with $\mu_p$ as defined in (\ref{eq:mup}), we conclude that 
\[ \begin{split} |\eta_p + \tau_p - \mu_p| &= \frac{1}{4} \left| \log \left( \frac{
p^2 + 2 \, \widehat{V f_\ell} (p/N)}{p^2 + 16 \pi \frak{a}_0} \right) \right| \\ &= 
 \frac{1}{4} \left| \log \left( 1 + \frac{2 \widehat{V f_\ell} (p/N) - 16 \pi \frak{a}_0}{p^2 + 16 \pi \frak{a}_0} \right) \right| \\ &\leq \frac{C}{p^2 + 16 \pi \frak{a}_0}  | \widehat{V f_\ell} (p/N) - 8\pi \frak{a}_0| \leq C N^{-1}  \end{split} \]
uniformly in $p$. Therefore, we obtain that 
\[ \begin{split} \| \wt{\nu}_j &- \nu_j \|^2 \\ &\leq 2 \sum_{p \in \Lambda^*_+}  \left[ | \cosh (\eta_p + \tau_p) - \cosh \mu_p |^2 + |\sinh (\eta_p + \tau_p) - \sinh (\mu_p)|^2 \right]  |(\widehat{q_0 O_j \ph_0}) (p)|^2 \\ &= 8 \sum_{p \in \Lambda^*_+} \left[ \cosh^2 \left( \frac{\eta_p + \tau_p + \mu_p}{2} \right) + \sinh^2  \left( \frac{\eta_p + \tau_p + \mu_p}{2} \right) \right] \sinh^2 \left( \frac{\eta_p + \tau_p - \mu_p}{2} \right)  |(\widehat{q_0 O_j \ph_0}) (p)|^2 \\ &\leq \frac{C}{N^2} \| q_0 O_j \ph_0 \|^2 \end{split} \]
Hence, we can compare 
\[ \begin{split} 
\langle \Omega, e^{is_1 \phi (\wt{\nu}_1)} &\dots e^{is_k \phi (\wt{\nu}_k)} \Omega \rangle - \langle \Omega, e^{is_1 \phi (\nu_1)} \dots e^{is_k \phi (\nu_k)} \Omega \rangle \\ &= 
\sum_{m=1}^k \langle \Omega, \prod_{j=1}^{m-1} e^{is_j \phi (\wt{\nu}_j)} \left( e^{is_m \phi (\wt{\nu}_m)} - e^{is_m \phi (\nu_m)} \right) \prod_{j=m+1}^k e^{is_j \phi (\nu_j)} \Omega \rangle \\ &= i\sum_{m=1}^k \int_0^{s_m} dt  \, \langle \Omega, \prod_{j=1}^{m-1} e^{is_j \phi (\wt{\nu}_j)}  e^{i (s_m -t) \phi (\wt{\nu}_m)} \phi (\wt{\nu}_m - \nu_m)  e^{it \phi (\nu_m)} \prod_{j=m+1}^k e^{is_j \phi (\nu_j)} \Omega \rangle
\end{split} \]
and we can estimate, with Lemma \ref{lm:weyl} and since $\| \nu_j \| \leq C \| q_0 O_j \ph_0 \|$ for all $j=1,\dots, k$, 
\[ \begin{split} 
&\left| \langle \Omega, e^{is_1 \phi (\wt{\nu}_1)} \dots e^{is_k \phi (\wt{\nu}_k)} \Omega \rangle - \langle \Omega, e^{is_1 \phi (\nu_1)} \dots e^{is_k \phi (\nu_k)} \Omega \rangle \right| \\ &\hspace{4cm} \leq \sum_{m=1}^k \| \wt{\nu}_m - \nu_m \| \int_0^{s_m} dt  \Big\| (\cN_+ + 1)^{1/2} e^{it \phi (\nu_m)} \prod_{j=m+1}^k e^{is_j \phi (\nu_j)} \Omega \Big\| \\
&\hspace{4cm} \leq \frac{C}{N} \sum_{m=1}^k |s_m| \| q_0 O_m \ph_0 \|  \left(1 + \sum_{j=m}^k s_j^2 \| q_0 O_j \ph_0 \|^2 \right)^{1/2}  \end{split} \]
Combining the last equation with (\ref{eq:step4-1}), we conclude that 
\[ \begin{split} 
& \left| \langle \xi_3 , e^{is_1 \phi (h_1)} \dots e^{is_k \phi (h_k)} \xi_3 \rangle - \langle \Omega, e^{is_1 \phi (\nu_1)} \dots e^{is_k \phi (\nu_k)} \Omega \rangle \right| \\ &\hspace{6cm} \leq \frac{C}{N} \sum_{m=1}^k |s_m| \| q_0 O_m \ph_0 \| \left( 1 + \sum_{j=1}^k s_j^2 \| q_0 O_j \ph_0\|^2 \right)^{3/2} \end{split} \]
as claimed. 

\medskip

{\it Step 5. Replacing modified with standard creation and annihilation operators.} We embed the truncated Fock space $\cF_+^{\leq N}$ into the full Fock space $\cF_+ = \bigoplus_{n \geq 0} L^2_{\perp \ph_0} (\Lambda)^{\otimes_s n}$ constructed on the orthogonal complement of $\ph_0$. On $\cF_+$, we consider the field operators $\phi_a (h) = a^* (h) + a (h)= \sum_{p \in \Lambda^*_+} (h (p) a^*_p + \overline{h(p)} a_p)$. Then, we have 
\begin{equation}\label{eq:step5}\begin{split}   &\left| \langle \Omega, e^{is_1 \phi (\nu_1)} \dots e^{is_k \phi (\nu_k)} \Omega \rangle- 
\langle \Omega, e^{is_1 \phi_a (\nu_1)} \dots e^{is_k \phi_a (\nu_k)} \Omega \rangle \right| \\ &\hspace{5cm} \leq \frac{C}{N} \sum_{m=1}^k |s_m| \| q_0 O_m \ph_0 \|   \, \left( 1 + \sum_{j=1}^m s_j^2 \| q_0 O_j \ph_0 \|^2 \right)^{3/2} \end{split} 
\end{equation}

\medskip

Recalling that $\phi (h) = b(h) + b^* (h) = \sqrt{1-\cN_+/N} \, a (h) + a^* (h) \sqrt{1-\cN_+/N}$, we compute 
\[ \begin{split} 
\langle \Omega, &e^{is_1 \phi (\nu_1)} \dots e^{is_k \phi (\nu_k)} \Omega \rangle- 
\langle \Omega, e^{is_1 \phi_a (\nu_1)} \dots e^{is_k \phi_a (\nu_k)} \Omega \rangle \\ = \; &\sum_{m=1}^k \langle \Omega, \prod_{j=1}^{m-1} e^{is_j \phi (\nu_j)} \left( e^{is_m \phi (\nu_m)} - e^{is_m \phi_a (\nu_m)} \right) \prod_{j=m+1}^k e^{is_j \phi_a (\nu_j)} \Omega 
\rangle \\ = \; &i \sum_{m=1}^k \int_0^{s_m} dt  \, \Big\langle \Omega, \prod_{j=1}^{m-1} e^{is_j \phi (\nu_j)} e^{i (s_m -t) \phi (\nu_m)} \\ &\hspace{1cm} \times \left[ \left( \sqrt{1 - \cN_+/N} - 1\right) a(\nu_m) + a^* (\nu_m) \left( \sqrt{1-\cN_+/N} - 1 \right) \right] e^{i t \phi_a (\nu_m)}  \prod_{j=m+1}^k e^{is_j \phi_a (\nu_j)} \Omega \Big\rangle \end{split} \]
With Lemma \ref{lm:weyl} (and using that $\| \nu_j \| \leq C \| q_0 O_j \ph_0\|$), 
we conclude that 
\[ \begin{split} 
&\left| \langle \Omega, e^{is_1 \phi (\nu_1)} \dots e^{is_k \phi (\nu_k)} \Omega \rangle- 
\langle \Omega, e^{is_1 \phi_a (\nu_1)} \dots e^{is_k \phi_a (\nu_k)} \Omega \rangle \right|  \\ &\hspace{2cm} \leq \; \frac{C}{N} \sum_{m=1}^k \| \nu_m \|  \int_0^{s_m} dt  \, \Big\| (\cN_+ +1)^{3/2}  e^{-i (s_m-t) \phi (\nu_m)} \prod_{j=1}^{m-1} e^{-i s_j \phi (\nu_j)} \Omega \Big\| 
\\ &\hspace{2cm} \leq \; \frac{C}{N} \sum_{m=1}^k |s_m| \| q_0 O_m \ph_0 \|   \, \left( 1 + \sum_{j=1}^m s_j^2 \| q_0 O_j \ph_0 \|^2 \right)^{3/2}  \end{split} \]

\medskip

{\it Step 6.} We have
\begin{equation}\label{eq:step6} \langle \Omega, e^{is_1 \phi_a (\nu_1)} \dots e^{is_k \phi_a (\nu_k)} \Omega \rangle = e^{-\frac{1}{2} \sum_{\ell,j=1}^k s_\ell s_j \Sigma_{\ell, j}} \end{equation}
with the $k \times k$ matrix 
\[ \Sigma_{\ell,j} = \left\{ \begin{array}{ll} \langle \nu_\ell, \nu_j \rangle, \quad &\text{if $\ell \leq j$} 
\\ \langle \nu_j, \nu_\ell \rangle, \quad &\text{if $j < \ell$} \end{array} \right. \]
 
\medskip

From the Weyl relation
\begin{align*}
e^{i \phi_a (f)} e^{i \phi_a (g)}  = e^{i \phi_a (f+g)} e^{- i \mathrm{Im} \langle f,g \rangle}
\end{align*}
for all $f,g \in L^2( \Lambda )$, we obtain, setting $\nu = \sum_{j=1}^k s_j \nu_j$, 
\begin{align*}
\prod_{j =1}^k e^{i s_j \phi_a (\nu_j)} = e^{i \phi_a (\nu)} \prod_{\ell<j}^k e^{- i s_\ell s_j \mathrm{Im} \langle \nu_\ell, \nu_j \rangle} = e^{-\| \nu \|^2/2} \prod_{\ell<j}^k e^{- i s_\ell s_j \mathrm{Im} \langle \nu_\ell, \nu_j \rangle} e^{i a^* (\nu)} e^{i a (\nu)}.
\end{align*}
Thus
\[ \langle  \Omega, e^{is_1 \phi_a (\nu_1)} \dots e^{is_k \phi_a (\nu_k)} \Omega \rangle = 
e^{-\| \nu \|^2/2} \prod_{\ell<j}^k e^{- i s_\ell s_j \mathrm{Im} \langle \nu_\ell, \nu_j \rangle} = e^{-\frac{1}{2} \sum_{\ell,j=1}^k s_\ell s_j \Sigma_{\ell, j}} \]
as claimed. 

\medskip

Combining now the results of (\ref{eq:step1}), (\ref{eq:step2}), (\ref{eq:step3}), (\ref{eq:step4}), (\ref{eq:step5}) and (\ref{eq:step6}), and inserting in (\ref{eq:pro1}), we conclude that 
\[ \begin{split} &\left| \mathbb{E}_{\psi_N}\left[ g_1(\mathcal{O}_{1}) \dots g_k ( \mathcal{O}_{k}) \right] - \int ds_1 \dots ds_k \; \widehat{g}_1(s_1) \dots \widehat{g}_k( s_k) \;  e^{-\frac{1}{2} \sum_{\ell,j=1}^k s_\ell s_j \Sigma_{\ell, j}} \right| \\ &\hspace{8cm} \leq \frac{C}{N^{1/4}} \prod_{j=1}^k \int ds \; | \widehat{g}_j (s)| (1 + |s|^4 \| O_j \|^4) \end{split} \]
for a constant $C > 0$ depending on $k$. Taking the Fourier transform, we conclude (under the assumption that the matrix $\Sigma$ is invertible) that 
\[  \begin{split} &\left| \mathbb{E}_{\psi_N}\left[ g_1(\mathcal{O}_{1}) \dots g_k ( \mathcal{O}_{k}) \right] - \int d\lambda_1 \dots d\lambda_k \, g_1(\lambda_1) \dots g_k (\lambda_k) \;  \frac{e^{-\frac{1}{2} \sum_{\ell,j=1}^k \Sigma^{-1}_{\ell, j} \lambda_\ell \lambda_j}}{\sqrt{(2\pi)^k \det \Sigma}}  \right| \\ &\hspace{8cm} \leq \frac{C}{N^{1/4}} \prod_{j=1}^k \int ds \; | \widehat{g}_j (s)| (1 + |s|^4 \| O_j \|^4) \end{split} \]
which concludes the proof of Theorem \ref{thm:main}
\end{proof}

\textit{Acknowledgements.} B.\,S.\ acknowledges support from the Swiss National Science Foundation (grant 200020\_172623) and from the NCCR SwissMAP.

\appendix

\section{Excited States}
\label{app}

Theorem \ref{thm:main} and Corollary \ref{cor:BE} describe the probability distribution associated with the ground state wave function $\psi_N$. From the analysis of \cite{BBCS}, we also obtain norm-approximations for eigenvectors associated to low-energy excited states of the Hamilton operator (\ref{eq:HN}). It follows from \cite[Section 6]{BBCS} that the eigenvector $\psi_N^{(p)}$ describing a single excitation of the ground state, with momentum $p \in \Lambda^*_+$ can be approximated (assuming non-degeneracy) by 
\begin{equation}\label{eq:exc-p} \| \psi_N^{(p)} - e^{i\omega} U_N^* T_\eta S T_\tau a_p^* \Omega \| \leq C N^{-1/4} \end{equation}
for an appropriate phase $\omega \in \bR$. 
Using (\ref{eq:exc-p}), we can study the probability distribution associated with $\psi_N^{(p)}$ as well. For $k \in \bN$, bounded one-particle observables $O_1, \dots , O_k$, functions $g_1, \dots , g_k \in L^1 (\bR)$ with Fourier transform $\widehat{g}_1, \dots , \widehat{g}_k \in L^1 (\bR, (1+|s|^4) ds)$, we obtain, repeating the analysis of Section \ref{sec:proof}, 
\[ \begin{split} &\left| \mathbb{E}_{\psi_N^{(p)}}  \left[ g_1 (\cO_1) \dots g_k (\cO_k) \right]  - \int ds_1 \dots ds_k \; \widehat{g} (s_1) \dots \widehat{g}_k (s_k) \, \langle a_p^* \Omega , \prod_{j=1}^k e^{is_j \phi_a (\nu_j)} a_p^* \Omega \rangle \right|  \\ &\hspace{8cm} \leq \frac{C}{N^{1/4}} \prod_{j=1}^k \int ds \, |\widehat{g} (s)| (1 + |s|^4 \| O_j \|^4) \end{split} \]
Setting $\nu (p) = \sum_{j=1}^k s_j \nu_j$, we obtain that 
\[  \begin{split}  \langle a_p^* \Omega , \prod_{j=1}^k e^{is_j \phi_a (\nu_j)} a_p^* \Omega \rangle &= \prod_{\ell < j} e^{- i s_\ell s_j \text{Im } \langle \nu_\ell , \nu_j \rangle} \langle a_p^* \Omega, e^{i \phi_a (\nu)} a_p^* \Omega \rangle \\ &= e^{-\| \nu \|^2/2} \prod_{\ell < j} e^{- i s_\ell s_j \text{Im} \langle \nu_\ell , \nu_j \rangle} \langle a_p^* \Omega, e^{i a^* (\nu)} e^{i a(\nu)} a_p^* \Omega \rangle \\ &= e^{-\frac{1}{2} \sum_{\ell , j = 1}^k \Sigma_{\ell,j} s_\ell s_j}  \langle (a_p^* + i \nu (p)) \Omega, (a_p^* - i \nu (p)) \Omega \rangle  \\ &= e^{-\frac{1}{2} \sum_{\ell , j = 1}^k \Sigma_{\ell,j} s_\ell s_j} (1+ |\nu (p)|^2) =  e^{-\frac{1}{2} \sum_{\ell , j = 1}^k \Sigma_{\ell,j} s_\ell s_j} \Big( 1 + \big| \sum_{j=1}^k s_j \nu_j (p) \big|^2 \Big)
\end{split} \]
Thus, we find  
\[ \begin{split} &\left| \mathbb{E}_{\psi_N^{(p)}}  \left[ g_1 (\cO_1) \dots g_k (\cO_k) \right]  - \int ds_1 \dots ds_k \; \widehat{g} (s_1) \dots \widehat{g}_k (s_k) \, e^{-\frac{1}{2} \sum_{\ell , j = 1}^k \Sigma_{\ell,j} s_\ell s_j} \Big(1 +  \big| \sum_{j=1}^k s_j \nu_j (p) \big|^2 \Big) \right| \\
&\hspace{10cm} \leq  \frac{C}{N^{1/4}} \prod_{j=1}^k \int ds \, |\widehat{g} (s)| (1 + |s|^4 \| O_j \|^4) \end{split} \]
Analogously, through more complicated combinatorics, we could also describe the probability distribution associated with states having multiple excitations.

\end{document}